\def\eps{\epsilon}%
\def\tensor{\,\raise2pt\hbox{${}_{\otimes}$}\,}
\def\fdg{\,:\,}
\def\ptl{\partial}
\def\rest#1{\raise-2pt\hbox{${\lfloor_{#1}}$}}
\def\olin#1{\overline{#1}{}}
\def\tild#1{\widetilde{#1}{}}
\def\grad{{\nabla}}
\newcommand{\leftexp}[2]{{\vphantom{#2}}^{#1}{#2}}
\def\halb{\frac{1}{2}}
\def \a{\alpha}
\def \b {\beta}
\newtheorem{theorem}{Theorem}[section]
\newtheorem{remark}[theorem]{Remark}
\newtheorem{definition}[theorem]{Definition}
\newcommand{\ba}{\begin{array}}
\newcommand{\ea}{\end{array}}
\newcommand{\bea}{\begin{eqnarray}}
\newcommand{\eea}{\end{eqnarray}}
\newcommand{\bee}{\begin{eqnarray*}}
\newcommand{\eee}{\end{eqnarray*}}
\renewcommand{\gg}{{\bf g}}
\renewcommand{\a}{\alpha}
\renewcommand{\b}{\beta}
\renewcommand{\r}{\rho}
\newcommand{\green}[1]{{\color{green}#1}}
\newcounter{mnotecount}[section]
\renewcommand{\themnotecount}{\thesection.\arabic{mnotecount}}
\newcounter{mymnotecount}[section]
\renewcommand{\themymnotecount}{\thesection.\arabic{mymnotecount}}
\newcommand{\mymnote}[1]{\protect{\stepcounter{mymnotecount}}${\raisebox{0.5\baselineskip}[0pt]{\makebox[0pt][c]{\color{green}{\tiny\em$\bullet$\themnotecount}}}}$\marginpar{\raggedright\tiny\em$\!\bullet$\themymnotecount:

\green{#1}}\ignorespaces}
\renewcommand{\mymnote}[1]{}
\begin{document}

\title{On 3+1 Lorentzian Einstein Manifolds with One Rotational Isometry}

\author{Nishanth Gudapati}
\address{Department of Mathematics, Yale University, 
10 Hillhouse Avenue, New Haven, CT-06511, USA}
\curraddr{}
\email{nishanth.gudapati@yale.edu}
\thanks{(NG) is supported by a Deutsche Forschungsgemeinschaft  Postdoctoral Fellowship GU1513/1-1 to Yale University}

\subjclass[2010]{Primary 83C05, Secondary 53C50, 35L70}

\date{}

\dedicatory{}

\begin{abstract}
We consider 3+1 rotationally symmetric Lorentzian Einstein spacetime manifolds with $\Lambda >0$ and reduce the equations to 2+1 Einstein equations coupled to `shifted' wave maps. Subsequently, we prove various (explicit) positive mass-energy theorems. No smallness is assumed.  
\end{abstract}
\maketitle
\section{Lorentzian Einstein Manifolds}
Smooth manifolds whose Ricci curvature is proportional to its metric are called Einstein manifolds. The additional structure endowed on such metrics due to the constant scalar curvature has resulted in  deep connections between differential geometry, string theory and algebraic geometry. We refer the reader to \cite{AB87} for a detailed  exposition on these topics. In this work however, we shall be interested only in Einstein manifolds that are Lorentzian (spacetimes) and positively curved. As we shall see later, the motivation to study these is that they model expanding spacetimes that are consistant with the Einstein's equations for general relativity. 
Suppose $(\bar{M}, \bar{g})$ is a smooth 3+1 dimensional globally hyperbolic Lorentzian spacetime that satisfies the following Einstein's equations for general relativity

\begin{align}\label{ee_pos_cosmo}
\bar{R}_{\mu \nu} - \halb  \bar{R}_{\bar{g}} \,  \bar{g}_{\mu \nu} + \Lambda \bar{g}_{\mu \nu}  = 0, \quad\textnormal{on}\quad (\bar{M}, \bar{g}) 
\end{align}
where $\Lambda >0$ and $\bar{R}_{\mu \nu}$ and $\bar{R}$ are the Ricci curvature and scalar curvature of $(\bar{M}, \bar{g})$ respectively. Equivalently, \eqref{ee_pos_cosmo} transforms to 
\begin{align}
\bar{E}_{\mu \nu} + \Lambda \bar{g}_{\mu \nu} =0, \quad\textnormal{on}\quad (\bar{M}, \bar{g})
\intertext{or}
\bar{R}_{\mu \nu} = \Lambda \bar{g}_{\mu \nu}, \quad\textnormal{on}\quad (\bar{M}, \bar{g}) \label{ee_ricci}
\end{align}
where $\bar{E}_{\mu \nu} \fdg = \bar{R}_{\mu \nu} - \halb  \bar{R}_{\bar{g}} \,  \bar{g}_{\mu \nu}  $ is the Einstein tensor.  In this work we shall follow the sign convention
\begin{align}
\bar{g}(n, n) <0,\quad \textnormal{if $n$ is timelike}.
\end{align}
The case $\Lambda =0$ corresponds to the vacuum Einstein's equations. In 1917 Einstein had introduced $\Lambda$ to his equations with the impression that it would account for the opinion then that the universe is static. In view of the experimental evidence that the large-scale universe is expanding, there is now a general consensus that the simplest way to reconcile the geometry of Einstein's equations with the experimental observations is to introduce the cosmological constant $\Lambda >0$. Mathematically, the case $\Lambda >0$ has garnered significant interest in the last few decades. Since solutions to \eqref{ee_pos_cosmo} admit physically relevant solutions (the de Sitter and the Kerr-de Sitter family), a popular problem has been the study of  stability of these solutions with respect to the initial value problem of \eqref{ee_pos_cosmo}, starting from Friedrich's early landmark results \cite{F86, F86_2,F91}  to  recent remarkable results on \emph{nonlinear} perturbation theory of the Kerr-de Sitter family in the slowly rotating case \cite{HV_16} (see also \cite{M05, SD12,VS12} and the references therein).  Furthermore, there has been an interest in studying asymptotics and gravitational radiation (see the series \cite{ABK1,ABK2,ABK3} and \cite{CI16}).

 Important tools to study the stability and asymptotics of the system \eqref{ee_pos_cosmo} are the  notion of mass-energy that is positive-definite and the estimates on its evolution. In this work we shall lay the groundwork for the initial value problem of a special `axisymmetric' subclass of the system \eqref{ee_pos_cosmo} that admits an additional structure that allows us to study these questions in an explicit manner. This additional structure, recasted in a Hamiltonian framework, has already been quite useful in \cite{GM17} in constructing positive-definite and conserved energy functionals for the \emph{linear} version of the black hole stability problem for the  $\Lambda =0$ case for the full sub-extremal range of the angular-momentum.

 In this work, we shall be interested in the spacetimes satisfying \eqref{ee_pos_cosmo} that admit the conformal null infinity, formally defined as
 
\begin{definition}[Penrose]\label{Penrose}
A Lorentzian spacetime $(\bar{M}, \bar{g})$ satisying \eqref{ee_pos_cosmo} admits a conformal null infinity if 
\begin{enumerate}
\item There exists a Lorentzian spacetime $(\mathcal{M}, \gg)$ with a boundary $\mathcal{I}$ such that $\gg = e^{2 \lambda} g$ in $\bar{M}$, where $\lambda$ is a smooth function in $\mathcal{M}$ such that $\lambda =0, \grad_\gg \lambda \neq 0$ on $\mathcal{I}$

\item $\exists$ a global diffeomorphism $M  \to \mathcal{M}\setminus \mathcal{I}$.

\end{enumerate}
\end{definition}
Then $\mathcal{I}$ is defined as the conformal null infinity of $\bar{M}.$ As it is well-known $\mathcal{I},$ if it exists, is spacelike for $\Lambda >0.$ In the later stages of our work we shall adapt the Definition \ref{Penrose} as per the situation. 

\subsection{Axisymmetric Spacetimes}
Suppose $(\bar{M}, \bar{g})$ is a smooth 3+1 dimensional globally hyperbolic Lorentzian spacetime that is smoothly foliated by a family of simply connected (Riemannian) Cauchy hypersurfaces $(\olin{\Sigma}_t, \bar{q}_t), t \in \mathbb{R}$.  The topology of  $(\bar{M}, \bar{g})$ is $\mathbb{R} \times (\olin{\Sigma}_t, \bar{q}_t) $ such that $\{ p \times \mathbb{R},\, p \in \bar{\Sigma}\}$ is timelike and $(\Sigma_t) \times \{ t\}$ is spacelike. 
\begin{definition}
The Cauchy hypersurfaces $(\olin{\Sigma}_t , \bar{q}_t)$  are axisymmetric if the rotation group $SO(2)$  acts as an isometry with closed orbits and nonempty fixed points set (`axis', denoted by $\Gamma$) for any $t \in \mathbb{R}$. 
\end{definition}
We shall denote $ (\bar{\Sigma}_t, \bar{q}_t)$ for some $t \in \mathbb{R}$ as $(\bar{\Sigma}, \bar{q})$.
Let $\phi \in [0, 2\pi)$ be a parameter associated to the isometry group $SO(2)$ such that $(\ptl_\phi)$ is a Killing vector  of  $(\olin{\Sigma}, \bar{q})$. By construction, $\Gamma$ defines the boundary of the 2 dimensional base $\Sigma \fdg = \bar{\Sigma} \setminus SO(2).$ Further define a 2+1 dimensional Lorentzian manifold $(M, g)$ such that $M \fdg= \bar{M} \setminus SO(2) = \Sigma \times \mathbb{R},$ where the action of the group $SO(2)$ on $\bar{M}$ is as mentioned above. 

In a suitably aligned coordinate system, we can express the metric $\bar{g}$ in terms of a metric 
$g$ of $M$ as follows
\begin{align}\label{kk-ansatz}
\bar{g} = g  + e^{2 u} \Phi^2, \quad\quad \textnormal{(Kaluza-Klein ansatz)}
\end{align}
where $\Phi$ is a 1-form such that $\Phi = d\phi + A_\nu dx^\nu, \nu =0, 1, 2$; the quotient metric $g, u$ and $A_\nu$ are independent of $\phi.$ The case $A \equiv 0$ corresponds to the case where $\ptl_\phi$ is hypersurface orthogonal. 
The case with $A \neq 0$ corresponds to the `twist' of the orbits of the  $SO(2)$ action. 

 As we shall see later, our consideration of axial symmetry of $(\bar{\Sigma}_t, \bar{q}_t)$ covers physically relevant cases, but introduces the following 
complication for $(M, g)$:
\begin{remark}
If the norm of the Killing vector $\ptl_{\phi}$ is given by $ \Vert \ptl_\phi \Vert_{\bar{g}}$, note that, by construction, $\Vert \ptl_\phi (p) \Vert_{\bar{g}} =0$ $\forall$ $p$  in the fixed point set of the $SO(2)$ action on $\bar{M}$ (i.e., on the axis $\Gamma$). 
\end{remark}

\noindent Nevertheless, it is the form of the ansatz \eqref{kk-ansatz} that gives us the additional geometric structure in the equations. 

\subsection{Trace-reversed 2+1 Einstein's equations}
With later use in mind, let us consider the 2+1 version of Einstein's equations in the interior of $(M, g)$ with the cosmological constant $\Lambda'$, the source $T$ and a coupling constant $\kappa$:

\begin{align}\label{ee-cosmo2+1}
R_{\mu \nu} -\halb g_{\mu \nu} R_g + \Lambda' g_{\mu \nu} = \kappa T_{\mu \nu}, \quad (M, g) \setminus \Gamma
\end{align}

Let us also represent \eqref{ee-cosmo2+1} in the trace-reversed form, we have after taking the trace
\begin{align}
R_g  = 2 \,(3\Lambda' - \kappa \text{tr}(T_{\mu \nu})
\end{align}
Thus the trace-reversed Einstein's equations are
\begin{align}\label{ee-cosmo-tr}
\bar{R}_{\mu \nu}  =  2\Lambda' g_{\mu \nu} + \kappa T_{\mu \nu} - \kappa\text{tr}(T_{\mu \nu}) g_{\mu \nu}.
\end{align}
Consider the following energy-momentum tensor

\[ T_{\mu \nu} \fdg = \grad_\mu u \grad_\nu u - \halb g_{\mu \nu} \grad^\sigma u \grad_\sigma u - \halb \Lambda g_{\mu \nu}  e^{-2u}  \]
then the trace of $T$
\[ \text{tr}(T_{\mu \nu})= - \halb \grad^\sigma u \grad_\sigma u - \frac{3}{2} \Lambda e^{-2u} \]
Thus, for wave map fields the quantity $T_{\mu \nu} -  \text{tr}(T_{\mu \nu}) g_{\mu \nu}$  on the right of \eqref{ee-cosmo-tr} is
\begin{align}
T_{\mu \nu} -  \text{tr}(T_{\mu \nu}) g_{\mu \nu}
=  \grad_\mu u \grad_\nu u +  \Lambda g_{\mu \nu} e^{-2u}
\end{align} 
Suppose we are coupling the Einstein's equations with a wave map field, then the trace-reversed equations are

\begin{align}\label{2+1wave}
R_{\mu \nu} = 2\Lambda' g_{\mu \nu} +\kappa ( \grad_\mu u \grad_\nu u +  \Lambda g_{\mu \nu} e^{-2u})
\end{align}
for the choice of choice of $\kappa =2,$
\begin{align}\label{kappa=2}
R_{\mu \nu} = 2\Lambda' g_{\mu \nu} + 2 \grad_\mu u \grad_\nu u + 2 \Lambda g_{\mu \nu} e^{-2u}, \quad (M, g) \setminus \Gamma.
\end{align}
 
\section{The case with $\ptl_\phi$ hypersurface orthogonal }
The case when $\ptl_\phi$ is hypersurface orthogonal, we have $A \equiv 0$. This is also refered to as the polarized case.

\subsection{Polarized Case}
If $A=0$ uniformly, there is no `twist' or `shift' of the Killing vector $\ptl_\phi$. This case is independently interesting as a few well known solutions of \eqref{ee_pos_cosmo} fall under this category. Thus
\begin{align} \label{polar-ansatz}
\bar{g} = g + e^{2u} d\phi^2
\end{align}
where $u$ and $g$ are independent of $\phi.$
For this form of the metric, the Ricci curvature $\bar{R}$ of $(\bar{M}, \bar{g})$ can be projected on the Ricci curvature $R$ of $(M,g)$ as follows

\begin{subequations} \label{Ricci-high-low-polar}
\begin{align}
\bar{R}_{\mu \nu} =& R_{\mu \nu} - \grad_\mu u \grad_\nu u - \grad_{\mu} \grad_{\nu} u  \label{Rmunu-polar}\\
\bar{R}_{\phi \phi} =& -e^{2u} (g^{\mu \nu} \grad_\mu \grad_\nu u + g^{\mu \nu} \grad_\mu u \grad_\nu u  ) \label{R33-polar}.
\end{align}
\end{subequations}
$\mu, \nu = 0, 1, 2$.
Before we can proceed, we need the following formulas concerning the conformal transformation $\tilde{g} \fdg = e^{2\psi} g:$

\begin{subequations} 
\begin{align}
\sqrt{-\tilde{g}} =& e^{3 \psi} \sqrt{-g} \\
\square_{\tilde{g}} u =&  \frac{1}{\sqrt{-\tilde{g}}} \ptl_\nu \left( \sqrt{-\tilde{g}}\, \tilde{g}^{\mu \nu}\ptl_\mu u \right) = e^{-2 \psi} \left( \square_g u + g^{\mu \nu} \ptl_\nu \psi \, \ptl_\mu u \right) \label{wave-conformal} \\
\tilde{R}_{\mu \nu} =& R_{\mu \nu} - g_{\mu \nu} \grad^\sigma \grad_\sigma \psi -  \grad_\mu  \grad_\nu \psi + \grad_\mu \psi \grad_\nu \psi - g_{\mu \nu} \grad^\sigma \psi \grad_\sigma \psi. \label{Ricci-conformal}
\end{align}
\end{subequations}
$\mu, \nu, \sigma =0, 1, 2.$

Now consider the equation \eqref{R33-polar} for $\bar{R}_{\phi \phi}$ and use the conformal transformation formula \eqref{wave-conformal} for $\psi =u.$
Consequently, in the interior of $(M, g)$, we have,

\begin{align}\label{conformalu1}
\square_{\tilde{g}} u = e^{-2u} \left( \square_g u + g^{\mu \nu} \ptl_\mu u \ptl_\nu u     \right)
\end{align}
Now we have from the 3+1 Einstein's equations \eqref{R33-polar}
\begin{align}
-e^{2u} \left( \square_g u + \grad^\nu u \grad_\nu u \right)=\bar{R}_{\phi \phi} = \Lambda \bar{g}_{\phi \phi} = \Lambda \, e^{2u}
\end{align}
which in connection with \eqref{conformalu1} implies
 \begin{align}
-e^{4u} (\square_{\tilde{g}} u) = \Lambda e^{2 u}.
 \end{align}
Thus we have,
\begin{align}\label{waveliouville}
\square_{\tilde{g}} u + \Lambda e^{-2 u} =0.
\end{align}

\noindent Now, for the equations \eqref{Rmunu-polar} consider the quantity
\begin{align*}
&\bar{R}_{\mu \nu} + e^{-2u} \bar{g}_{\mu \nu} \bar{R}_{\phi \phi} \\
&= R_{\mu \nu} - \grad_\mu u \grad_\nu u - \grad _\mu \grad_\nu u - g_{\mu \nu} (g^{\a\b} \grad_\a \grad_\b u + g^{\a\b} \grad_\a u \grad_\b u )
\end{align*}
Also the conformally transformed Ricci tensor $\tilde{R}_{\mu \nu}$ in \eqref{Ricci-conformal} for $\psi = u$ in the interior of $M$
\begin{align*}
\tilde{R}_{\mu \nu} =& R_{\mu \nu} - g_{\mu \nu} \grad^\sigma \grad_\sigma u -  \grad_\mu  \grad_\nu u + \grad_\mu u \grad_\nu u - g_{\mu \nu} \grad^\sigma u \grad_\sigma u. 
\end{align*}
So that we have 
\begin{align*}
 2 \Lambda \bar{g}_{\mu \nu} - \tilde{R}_{\mu \nu}=& \bar{R}_{\mu \nu} + e^{-2u} \bar{g}_{\mu \nu} \bar{R}_{\phi \phi}  - \tilde{R}_{\mu \nu}\\
=& -2 \grad_\nu u \grad_\nu u
\end{align*}
Therefore, we have
\begin{align}
\tilde{R}_{\mu \nu} =  2 \tilde{\grad}_\nu u \tilde{\grad}_\nu u + 2\Lambda e^{-2u} \tilde{g}_{\mu \nu} 
\end{align}
In summary we have arrived at the system
\begin{align}
\tilde{R}_{\mu \nu} =&  2 \tilde{\grad}_\nu u \tilde{\grad}_\nu u + 2\Lambda e^{-2u} \tilde{g}_{\mu \nu} \\
\square_{\tilde{g}} u +& \Lambda e^{2u} = 0. \label{liouville-2}
\end{align}
in the interior of $M.$ In particular, note that we have arrived at the system that is as if it is a 2+1 vacuum spacetime that is coupled to a nonlinear equation \eqref{liouville-2} (i.e., $\Lambda' =0$ in \eqref{2+1wave}) and a coupling constant $\kappa =2.$ In other words, the expansion parameter $\Lambda$ is now `transferred' to the forcing term in \eqref{liouville-2}.
\subsection{Variational Formulation}
 
 Consider the Einstein-Hilbert action for $\Lambda >0.$
 
 \begin{align}
 \bar{S}_{\text{EH}} \fdg = \halb \int_{\bar{M}} (\bar{R}_{\bar{g}} - 2\Lambda) \bar{\mu}_{\bar{g}}.
 \end{align}
 
 In view of the Gauss-Kodazzi reduction equations \eqref{Ricci-high-low-polar}, we have
 \begin{align}
 \bar{R}_{\bar{g}} = R_g - 2 (\grad^\nu \grad_\nu u + \grad^\nu u \grad_\nu u),
 \end{align}
 upon conformal transformation $\tilde{g} = e^{2u} g$, the scalar curvature transforms as 
 \begin{align}
 \tilde{R}_{\tilde{g}} = e^{-2u} \left(R_g - 2 \grad^\nu u \grad_\nu u -4 \grad^\nu \grad_\nu u  \right)
 \end{align}
 and  \[ \sqrt{-\bar{g}} = e^{-2u} \sqrt{-\tilde{g}} \]
 Thus 
 \begin{align}
 \bar{S}_{\text{EH}} = & \halb \int_{\bar{M}}   (e^{2u} \tilde{R}_{\tilde{g}} + 2 \grad^\nu \grad_\nu u -2 \Lambda) \bar{\mu}_{\bar{g}} \notag\\
 =& \halb \int_{\bar{M}} (e^{2u} \tilde{R}_{\tilde{g}} + 2 e^{2u} \tilde{\grad}^\nu \tilde{\grad}_\nu u -2e^{2u}\tilde{g}^{\mu \nu} \tilde{\grad}^\nu u \tilde{\grad}_\nu u - 2\Lambda ) \bar{\mu}_{\bar{g}} \notag\\
 =& \halb \int_{\tilde{M}} (\tilde{R}_{\tilde{g}}  -2\tilde{g}^{\mu \nu} \tilde{\grad}^\nu u \tilde{\grad}_\nu u - 2 \Lambda e^{-2u} ) \bar{\mu}_{\tilde{g}} \notag\\
 \tilde{S}_{\text{EH}}=& \halb \int_{\tilde{M}} \left(\tilde{R}_{\tilde{g}}  -\kappa (\tilde{g}^{\mu \nu} \tilde{\grad}^\nu u \tilde{\grad}_\nu u + \Lambda e^{-2u} ) \right) \bar{\mu}_{\tilde{g}}
 \end{align}
where we used the convention that two variational functionals are `equal' if their compactly supported first variations in the interior of $M$ produce the same Euler-Lagrange equations. Thus we can recover the 2+1 Einstein's equations if we perform the variation with respect to $\tilde{g}$ and $u$ respectively. For simplicity in notation, we shall henceforth  denote the metric $\tilde{g}$ by $g$ itself.  As additional verification of our reduction, we shall also verify the equation \eqref{waveliouville} for some well known solutions of \eqref{ee_pos_cosmo}.

\begin{theorem}\label{reduction-notwist}
If $u$ satisfies the equation \eqref{liouville-2}, the Einstein's equations \eqref{ee_pos_cosmo} with axial symmetry and $A\equiv 0$ reduce to `vacuum' (i.e., $\Lambda'=0$) Einstein's equations \eqref{kappa=2} in the interior of  $(M, g)$ with a specified source i.e., the stress-energy tensor of the wave equation \eqref{liouville-2}.
\end{theorem}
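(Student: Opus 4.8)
\emph{Approach.} I read the statement as the assertion that the full $3+1$ system \eqref{ee_ricci}, imposed under the polarized ansatz \eqref{polar-ansatz}, is equivalent on the quotient $(M,g)\setminus\Gamma$ --- after the conformal change $\tilde g=e^{2u}g$ --- to the pair consisting of the $2+1$ Einstein equations \eqref{kappa=2} with $\Lambda'=0$, $\kappa=2$ and the scalar equation \eqref{liouville-2}. Thus the plan is simply to substitute the ansatz into $\bar R_{\mu\nu}=\Lambda\bar g_{\mu\nu}$ and sort the resulting identities by index type: the single fibre component $\phi\phi$, and the base components $\mu,\nu\in\{0,1,2\}$. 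The only inputs needed are the Kaluza--Klein projection formulas \eqref{Ricci-high-low-polar}, which express $\bar R_{\mu\nu}$ and $\bar R_{\phi\phi}$ in terms of $R_{\mu\nu}$, $\grad u$ and $\grad\grad u$, together with the conformal transformation identities \eqref{wave-conformal} and \eqref{Ricci-conformal} evaluated at $\psi=u$.

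\emph{The fibre component.} First I would combine \eqref{R33-polar} with the conformal wave-operator identity \eqref{conformalu1} to rewrite $\bar R_{\phi\phi}=-e^{2u}\big(\square_g u+\grad^\nu u\grad_\nu u\big)=-e^{4u}\,\square_{\tilde g}u$. Equating this with the $\phi\phi$-component of \eqref{ee_ricci}, namely $\Lambda\bar g_{\phi\phi}=\Lambda e^{2u}$, yields exactly \eqref{waveliouville}. This shows that the hypothesis ``$u$ satisfies \eqref{liouville-2}'' is not an extra assumption but precisely the content of the $\phi\phi$ Einstein equation; it is the ``shifted wave (Liouville)'' equation into which the expansion parameter $\Lambda$ has been absorbed as a forcing term.

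\emph{The base components (main step).} Next I would form the combination $\bar R_{\mu\nu}+e^{-2u}\bar g_{\mu\nu}\bar R_{\phi\phi}$. Its geometric side, using \eqref{ee_ricci}, is $\Lambda\bar g_{\mu\nu}+e^{-2u}\bar g_{\mu\nu}\,\Lambda e^{2u}=2\Lambda\bar g_{\mu\nu}=2\Lambda e^{-2u}\tilde g_{\mu\nu}$ (recall $\bar g_{\mu\nu}=g_{\mu\nu}=e^{-2u}\tilde g_{\mu\nu}$ for base indices). On the curvature side I would insert \eqref{Ricci-high-low-polar} and subtract the conformal Ricci identity \eqref{Ricci-conformal} at $\psi=u$: the combination is engineered so that the Hessian terms $\grad_\mu\grad_\nu u$ and the trace terms $g_{\mu\nu}\square_g u$ and $g_{\mu\nu}\grad^\nu u\grad_\nu u$ cancel in pairs, leaving only $-2\grad_\mu u\grad_\nu u$. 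Rearranging gives $\tilde R_{\mu\nu}=2\tilde\grad_\mu u\,\tilde\grad_\nu u+2\Lambda e^{-2u}\tilde g_{\mu\nu}$, which is \eqref{kappa=2} with $\Lambda'=0$ and $\kappa=2$, as claimed.

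\emph{Main obstacle and a cross-check.} The hard part is purely the cancellation bookkeeping: one must track signs and conformal weights carefully so that exactly one copy of the wave-map stress tensor $2\grad_\mu u\grad_\nu u$ survives, and so that the residual $\Lambda$-terms reassemble into the $e^{-2u}$-weighted source rather than a genuine $2+1$ cosmological constant $\Lambda'$. A clean independent confirmation is the variational route already set up in the excerpt: reducing $\bar S_{\text{EH}}$ and discarding the divergence $\grad^\nu\grad_\nu u$ produces $\tilde S_{\text{EH}}$, whose Euler--Lagrange equations in $\tilde g$ and in $u$ are precisely \eqref{kappa=2} and \eqref{liouville-2}. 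This both cross-checks the tensorial algebra and exhibits the fibre equation as the $u$-variation, hence as automatically compatible --- via the contracted Bianchi identity applied to the base equations --- with the reduced Einstein system. Throughout, the reduction holds only on $M\setminus\Gamma$, where $e^{2u}=\Vert\ptl_\phi\Vert^2_{\bar g}>0$; the behaviour at the axis $\Gamma$ is a separate regularity matter and lies outside this statement.
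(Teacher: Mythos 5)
Your proposal is correct and follows essentially the same route as the paper: the $\phi\phi$ component of $\bar R_{\mu\nu}=\Lambda\bar g_{\mu\nu}$ combined with the conformal identity \eqref{wave-conformal} at $\psi=u$ gives the Liouville wave equation, and the combination $\bar R_{\mu\nu}+e^{-2u}\bar g_{\mu\nu}\bar R_{\phi\phi}$ compared against \eqref{Ricci-conformal} yields $\tilde R_{\mu\nu}=2\tilde\grad_\mu u\,\tilde\grad_\nu u+2\Lambda e^{-2u}\tilde g_{\mu\nu}$, exactly as in the text preceding the theorem. Your variational cross-check via the reduced Einstein--Hilbert action is likewise the same consistency argument the paper itself supplies.
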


Finally, we have a few remarks on the equation
\begin{align}\label{waveliouville1}
\square_g u + \Lambda e^{-2 u} =0.
\end{align}
Recall that the constant $\Lambda$ is closely related to the scalar curvature of $(\bar{M}, \bar{g})$. Thus \eqref{waveliouville1} is reminiscent of the Liouville's equation in differential geometry (cf. Nirenberg Problem \cite{M71})
\begin{align}
\leftexp{(2)}\Delta_0 u + K e^{2u}-1 =0
\end{align}
where $K$ is the Gauss curvature  of a 2-manifold and $\leftexp{(2)}\Delta_0$ is the flat 2-Laplacian. Therefore, let us henceforth refer to the equation \eqref{waveliouville} as the `Liouville wave equation'. As we shall see later, this exponential term stays intact even if we turn on the twist term. In a series of beautiful works \cite{MS14,MSMS16}, the Moser-Trudinger inequality was used to study asymptotic behaviour for nonlinear waves with exponential nonlinearities. 
As we shall see, there exist axisymmetric spacetimes with $\Lambda>0$ that admit black holes and it is not inconceivable that the Moser-Trudinger inequality could of use in the study of the nonlinear stability in \emph{compact} domains of such spacetimes (as considered in \cite{HV_16}). 

In the following we shall discuss some well known solutions of \eqref{ee_pos_cosmo} that admit the ansatz 
\eqref{polar-ansatz}. We shall only focus on the aspects that we need, we refer the reader to \cite{AM11}  and Section 5.2 in \cite{HE73} for instance, for more details, Penrose diagrams and other depictions of these spacetimes.

\subsection*{The de Sitter Spacetime}
The de Sitter spacetime is the simplest solution of Einstein's equations with $\Lambda >0$. Geometrically, it is a Lorentizian submanifold obtained by the level sets of the function 
\[ f = -X_0^2 + X_1^2+ X_2^2 + X_3^2 + X_4^2\]
in the Minkowski space $\mathbb{R}^{4+1}.$ Suppose $f=r_c>0$, the de Sitter metric can be expressed globally as 
\begin{align}\label{dS-global}
\bar{g}_{dS} = -dT^2 + r_c^2 \cosh^2 (T/r_c) d\omega^2_{\mathbb{S}^3}
\end{align}
where $ d\omega^2_{\mathbb{S}^3}$ is the standard metric in the unit 3-sphere
$ \mathbb{S}^3.$ The metric $\bar{g}_{dS}$ satisfies the Einstein's equations 
\eqref{ee_pos_cosmo} for $r_c = \sqrt{\frac{3}{\Lambda}}.$

It is easy to see from the form of the metric in \eqref{dS-global} that the standard rotational Killing vector $\ptl_\phi$ acts globally on the de Sitter spacetime through isometries. The de Sitter spacetime admits a static region:
\begin{align}\label{dS-static}
\bar{g}_{dS} = - \left(1-\frac{\Lambda}{3}r^2 \right) dt^2 + \left(  1-\frac{\Lambda}{3}r^2\right)^{-1} dr^2 + r^2 d\omega_{\mathbb{S}^2}
\end{align}
where it admits a timelike Killing vector. Importantly, we should point out that the upper root $r_c = \sqrt{\frac{3}{\Lambda}}$ of $\Delta_{dS} \fdg= 1-\frac{\Lambda}{3}r^2$ is
the cosmological horizon. 
However, the timelike Killing vector in \eqref{dS-static} does not exist globally, in particular the de Sitter spacetime admits an expanding region:

\begin{align}\label{exp-dS}
\bar{g}_{dS} = -  d \hat{t}^2 + e^{2t/r_c} \left( \sum^{3}_{i =1}(d \hat{x}^i)^2\right)
\end{align}
covered by the coordinates $(\hat{t}, \hat{x}_i), i= 1, 2, 3.$
We shall use the form of the metric in \eqref{exp-dS} in asymptotic expansions to define the notion of asymptotically de Sitter spacetimes. If we recast the metric \eqref{exp-dS} in the Kaluza-Klein ansatz:
\begin{align}
\bar{g}_{dS} = e^{-2u} (-e^{2t/r_c}\r^{2} dt^2 + \r^{2}e^{4t/r_c}(d\r^2 + dz^2)) + e^{2u}d\phi^2
\end{align}

where $u = \log \r + t/r_c$ and the 2+1 metric is 
\begin{align}
g_{dS} = -e^{2t/r_c}\r^{2} dt^2 + \r^{2}e^{4t/r_c}(d\r^2 + dz^2)
\end{align}
and its determinant $\sqrt{-g_{dS}} = \r^3 e^{5t/r_c}.$
Now let us verify that the dS metric satisfies the Liouville wave equation for $\r>0$:
\begin{align}
\square_{g_{dS}} u =& \frac{1}{\sqrt{-g_{dS}}} \ptl_\nu (\sqrt{-g_{dS}} \,g^{\mu \nu}_{dS}\ptl_\mu u) 
= \frac{1}{\sqrt{-g_{dS}}} \ptl_t ( \r^3 e^{5t/r_c} (-\r^2 e^{-2t/r_c}) \frac{1}{r_c})\\
=& -\frac{3}{r_c^2 \r^2 e^{-2t/r_c}}
= - \Lambda e^{-2u}
\end{align}
Thus we have recovered the Liouville wave equation
 \[ \square_{g_{dS}} u + \Lambda e^{-2u} =0. \]
 Likewise we can also verify this for the static region (see below).
\subsection*{The Schwarzschild-de Sitter Spacetime}
A more general solution of Einstein's equations \eqref{ee_pos_cosmo} with $\Lambda >0$ is the Schwarzschild de Sitter metric (to be abbreviated henceforth as SdS). Importantly, the SdS metric admits black holes. 
In static coordinates the Schwarzschild-deSitter metric can be expressed as

\begin{align}\label{SdS_static}
\bar{g}_{SdS} = - \left(1-\frac{2m}{r}+\frac{\Lambda}{3}r^2 \right) dt^2 + \left(  1-\frac{2m}{r}-\frac{\Lambda}{3}r^2\right)^{-1} dr^2 + r^2 d\omega_{\mathbb{S}^2}
\end{align} 
which also admits the Killing vector $\ptl_\phi.$ In this work we shall restrict to the cases where the polynomial 
\begin{align}
\Delta_{SdS} = \Lambda r^3 -3r + 6m =0
\end{align}
admits precisely one negative (real) root and two distinct positive (real) roots. With this condition in static coordinates, it is possible to identify positive roots of $\Delta_{SdS}$ as the horizons of the SdS spacetime: $r_{+}$ and $r_c$ with $r_{+} <r_c.$
The SdS spacetime also admits an expanding region, which can be covered by the following coordinates:
\begin{align}\label{SdS_exp}
\bar{g}_{SdS} = - \left( \frac{1-\frac{m}{2r}e^{-t/r_c}}{{1+\frac{m}{2r}e^{-t/r_c}}} \right) d \hat{t}^2 +\left( 1+ \frac{m}{2r} e^{-t/r_c} \right)^4 e^{2t/r_c} \left(\sum^{i=3}_{i=1} (d\hat{x}^i)^2 \right)
\end{align}
where the radial function is defined as $r \fdg=  \sqrt{\sum (\hat{x}^i)^2}.$ Note that the $\hat{t} = const.$ slices are conformally flat in these coordinates. 
Using a different parametrization for $t$ and $r$ in \eqref{SdS_static}, the expanding region can also be represented using the form of the metric in \eqref{SdS_static}. 
Consider the SdS metric in $(t', r', \theta', \phi')$ coordinates:
\begin{align}\label{SdS_exp-static}
\bar{g}_{SdS} = - \left(1-\frac{2m}{r'}+\frac{\Lambda}{3}r'^2 \right) dt'^2 + \left(  1-\frac{2m}{r'}-\frac{\Lambda}{3}r'^2\right)^{-1} dr'^2 + r'^2 d\omega_{\mathbb{S}^2}
\end{align} 
The SdS metric in \eqref{SdS_exp-static} admits a conformal completion as per the Definition \ref{Penrose}, for $\lambda = \frac{1}{r'} \sqrt{\frac{3}{\Lambda}}.$ In \cite{ABK1}
it was noted that the topology of $\mathcal{I}^+$ is $\mathbb{S}^2 \times \mathbb{R}$ and the ends $t' = - \infty$ and $t' = \infty$ were identified as $\iota^0$ (`spatial infinity')and $\iota^{+}$ (future temporal infinity) respectively. Based on this, the notion of asymptotically Schwarzschild-de Sitter spacetime was defined:

\begin{definition}[Asymptotically SdS Lorentzian Einstein Manifolds \cite{ABK1}]
A Lorentzian spacetime $(\bar{M}, \bar{g})$ satisfying \eqref{ee_pos_cosmo} is (future) asymptotically Schwarzschild de-Sitter if 
\begin{itemize}
\item There exists a Lorentzian spacetime $(\mathcal{M}, \gg)$ with a (future) boundary $\mathcal{I}^{+}$ such that $\gg = e^{2 \lambda} g$ in $\bar{M}$, where $\lambda$ is a smooth function in $\mathcal{M}$ such that $\lambda =0, \grad_\gg \lambda \neq 0$ on $\mathcal{I}^{+}$
\item $\exists$ a global diffeomorphism $M  \to \mathcal{M}\setminus \mathcal{I}^{+}$
\item the topology of $\mathcal{I}^{+}$ is $\mathbb{S}^2 \times \mathbb{R} \simeq \mathbb{S}^3 \setminus \{ p_1, p_2\}.$
\end{itemize}
\end{definition}
\noindent where $p_1$ and $p_2$ represent $\iota^0$ and $\iota^+$ respectively.

\noindent Let us now represent the SdS metric in the Kaluza-Klein form:

\begin{align*}
\bar{g}_{SdS} =& r^{-2} \sin^{-2} \theta \left( -r^2 \sin^2 \theta ( 1-\frac{2m}{r} - \frac{\Lambda}{3}r^2) dt^2 + ( 1-\frac{2m}{r} - \frac{\Lambda}{3}r^2)^{-1} dr^2 + r^2 d\theta^2 \right)\\
& \quad + r^2 \sin^2 \theta d\phi^2
\end{align*}
so that the 2+1 metric $g_{SdS}$ is given by 
\begin{align}
 g_{SdS} = -r^2 \sin^2 \theta ( 1-\frac{2m}{r} - \frac{\Lambda}{3}r^2) dt^2 + ( 1-\frac{2m}{r} - \frac{\Lambda}{3}r^2)^{-1} dr^2 + r^2 d\theta^2 
\end{align}
where $e^{2u} = r^2 \sin^2 \theta$ and the metric determinant $\sqrt{-g_{SdS}} = r^4 \sin^3 \theta.$
Now we shall verify if the SdS metric satisfies the Liouville wave equation for $\theta \in (0, \pi)$ and $r_+<r<r_c$: Consider the wave operator
\begin{align*}
\square_{g_{SdS}} u =& \frac{1}{\sqrt{-g_{SdS}}} \ptl_\nu (\sqrt{-g_{SdS}} \ptl^\nu u) \\
=&\frac{1}{\sqrt{-g_{SdS}}} (\ptl_r (r \sin \theta(1-\frac{2m}{r} - \frac{\Lambda}{3}r^2)) + \ptl_\theta (r \sin\theta \, \ptl_\theta(\log r \sin \theta))) \\
=& \frac{1}{r^4 \sin^2 \theta} (\sin\theta (1-\Lambda r^2) -\sin \theta ) = -\frac{\Lambda}{r^2 \sin^2 \theta}
\end{align*}
Thus we have recovered the Liouville wave equation for SdS
\[ \square_{g_{SdS}} u + \Lambda e^{-2u} =0, \]
which reduces to the dS case if we restrict to $m=0.$

\subsection{A Positive-Definite Geometric Energy}
As we alluded to earlier, a positive definite quantity arising from the geometry of Einstein's equations with $\Lambda >0$ would be useful in the estimates in the Cauchy problem. In the case with a rotational symmetry the structure of the equations gives us a few advantages but also introduces subtleties. As such,  the dimensional reduction to the 2+1 dimensional Einstein's equations coupled to the Liouville wave equation results in a canonical energy from the stress-energy tensor of the source: 

\[ E_{div} \fdg = \int_{\Sigma} T(n, n),  \]
where $n$ is the unit timelike normal of $\Sigma \hookrightarrow M.$
Indeed, this energy from the stress-energy tensor has a `defocussing' behaviour and 
 is manifestly positive-definite for $\Lambda >0$ but it is divergent (see \cite{NG17} where the asymptotically flat case ($\Lambda =0$)  is considered but the issues are essentially the same). To obtain a convergent energy we turn to a Hamiltonian quantity of the original 3+1 Einstein's equations:
\begin{align}
\bar{R}_{\mu \nu} = \Lambda \bar{g}_{\mu \nu},\quad \mu, \nu = 0, 1, 2, 3.
\end{align}
Suppose $(\bar{M}, \bar{g})$ admits a 3+1 decomposition such that a smooth (Riemannian) Cauchy hypersurface embeds into it $\bar{\Sigma} \hookrightarrow M $ with a unit time-like normal $n$. 
The Gauss-Kodazzi equations give the following projection relation

\begin{align}
\bar{E} (n, n) = \halb \left(  \bar{R}_{\bar{q}} + (tr_{\bar{q}}(\bar{K}))^2 - \Vert \bar{K} \Vert_{\bar{q}}^2  \right)
\end{align}
where $\bar{K}$ is second fundamental form of the embedding $\bar{\Sigma} \hookrightarrow \bar{M}$ and $\bar{R}_{\bar{q}}$ is the scalar curvature of $(\bar{\Sigma}, \bar{q}).$
Therefore the Hamiltonian constraint $\bar{E}(n, n) + \Lambda \bar{g} (n,n) =0$ now reads
\begin{align}
\bar{R}_{\bar{q}} + (tr_{\bar{q}}(\bar{K}))^2 - \Vert \bar{K} \Vert_{\bar{q}}^2 + 2 \Lambda \bar{g} (n, n) =0.
\end{align}
Note that $n$ is timelike, so in view of our signature $\bar{g} (n, n) < 0,$ results in positive scalar curvature metrics $(\olin{\Sigma}, \bar{q})$ for a suitable time coordinate gauge condition. 
Suppose $(\bar{\Sigma}, \bar{q})$ admits the following two forms of the metric
\begin{itemize}
\item Kaluza-Klein coordinates
 \begin{align} \label{wmmetric}
\bar{q} = e^{-2u} ( e^{2q'} (d\rho^2 + dz^2)) + e^{2u}  d\phi^2
\end{align}
\item Brill coordinates
 \begin{align} \label{Brillmetric}
\bar{q} = e^{\sigma} \left( e^{2q} (d\rho^2 + dz^2) + \rho^2 d\phi^2 \right)
\end{align}
\end{itemize} 
It may be noted that the forms of the metric in \eqref{wmmetric} and \eqref{Brillmetric} are closely related. For this work, we shall choose to use the Brill coordinates.  
We shall assume the following regularity conditions on the axis $\Gamma\, (\r =0),$
\begin{align}
q =0, \quad \ptl_\r \sigma = \ptl_\r q =0 \quad \text{on} \quad \Gamma.
\end{align}
For the form of the metric in \eqref{Brillmetric}, the scalar curvature can be calculated as 
\begin{align}
\bar{R}_q =& - \frac{e^{-2q-\sigma}}{2\r} \left(     \r ( \ptl_z \sigma)^2 + (\ptl_\r \sigma )^2 + 4\r  ( \ptl_\r^2 q + \ptl^2_z q + \ptl^2_\r \sigma + \ptl^2_z \sigma ) + 4 \ptl_\r \sigma \right) \notag\\
=& -e^{-2q-\sigma} \left( \halb \vert \grad_0 \sigma \vert^2 + 2 \Delta_0 \sigma + 2 \Delta_0 q -\frac{2}{\rho} \ptl_\rho q   \right)
\intertext{Likewise, in the form of the metric in \eqref{wmmetric}}
\bar{R}_{\bar{q}} =& -2 e^{-2q' + 2u} \left(  (\ptl_z u )^2 + (\ptl_z q')^2 + (\ptl_\rho u)^2 + \ptl^2_\rho q' \right) \\
=& -2 e^{-2q' + 2u} ( \Delta_0 q' - \frac{1}{\rho}  q' + \vert \grad_0 u \vert ^2 )
\end{align}

where 
\begin{align}
 \Delta_0 u \fdg =& \ptl^2_\r u + \ptl^2_z u + \frac{1}{\rho} \ptl_\rho u  \\
 \vert \grad_0 u \vert^2 \fdg=& (\ptl_\r u)^2 + (\ptl_z u)^2
\end{align}

is the $\Delta_0$ flat-space Laplacian and $\vert \grad_0 \cdot \vert^2$ is the (pointwise) Euclidean norm of the flat space gradient applied on axisymmetric function $u$.  The Hamiltonian constraint now reads:
\begin{align}\label{hc1}
-e^{-2q-\sigma} \left( \halb \vert \grad_0 \sigma \vert^2 + 2 \Delta_0 \sigma + 2 \Delta_0 q -\frac{2}{\rho} \ptl_\rho q   \right) + tr(\bar{K})^2 - \Vert \bar{K} \Vert_{\bar{q}}^2 - 2\Lambda \bar{g} (n, n) =&\, 0
\intertext{and} -2 e^{-2q' + 2u} ( \Delta_0 q' - \frac{1}{\rho}  q' + \vert \grad_0 u \vert ^2 ) + tr(\bar{K})^2 - \Vert \bar{K} \Vert_{\bar{q}}^2 - 2\Lambda \bar{g} (n, n) =&\, 0 \label{hc2}
\end{align}

Positivity of charges of Einstein's equations with $\Lambda >0$ close to de Sitter were studied in \cite{AD82}. Separately, using spinor methods various positive energy theorems were proven in \cite{KT02,ST15}. In this work we shall be interested in explicit tensorial positive definite expressions within the class of axial symmetry.  
 We need the following notions of 
asymptotically flat and asymptotically de Sitter spacetimes. Suppose $(\olin{\Sigma}_{AF}, \bar{q}_{AF})$ is a smooth Riemannian manifold, it is called a (globally) asymptotically flat manifold as per the following definition:
\begin{definition}[Asymptotic flatness]
The Riemannian manifold $(\bar{\Sigma}_{AF}, \bar{q}_{AF}, \bar{K}_{AF})$ is (spatially) asymptotically flat if
\begin{itemize}
\item for some compact set $U$, $\exists$ a diffeomorphism from $\bar{\Sigma} \setminus U \to \mathbb{R}^3\setminus \mathbb{B}_1(0)$\, (`end' or asymptotic region)
 \item $(\bar{\Sigma}_{AF}, \bar{q}_{AF}, \bar{K}_{AF})$ admits the asymptotic expansion
\begin{align}\label{AF}
 \bar{q}_{AF} =& \left( 1+\frac{m_{ADM}}{r} \right) \bar{\delta}_{ij} + \mathcal{O} (r^{-1-\alpha}) \\
 \bar{K}_{AF} =& \mathcal{O} (r^{-2-\a})
 \end{align}
 for $\alpha \geq 1/2$ in the asymptotic region.
\end{itemize}
The quantity $m_{ADM}$ is the ADM mass at infinity given by the flux integral
\begin{align}\label{defadm}
m_{ADM} \fdg= \frac{1}{16\pi} \lim_{r \to \infty} \int_{\mathbb{S}^2(r)} \bar{\delta}^{kl} \left( \ptl_k \bar{q}_{il} - \ptl_i \bar{q}_{kl}  \right) \frac{x^i}{r} \bar{\mu}_{\mathbb{S}^2(r)} 
\end{align}
where $r$ is the radius of a coordinate sphere at infinity. 
\end{definition}

As it is well known the positivity of the ADM mass $m_{ADM}$ was proven in the classic works of \cite{schoen-yau-1,schoen-yau-2,witten-pmt}. Using stronger fall-off conditions than in \eqref{AF}, Brill used the explicit form of the Hamiltonian constraint for time-symmetric axially symmetric (polarized) slices to prove an explicit positive mass theorem. It was an early indication of positivity of mass-energy of vacuum Einstein's equations. 
\begin{remark}
In the context of Einstein's equations with $\Lambda >0,$ since the geometry of the well-known solutions is different in different regions, the structure of the Hamiltonian constraint equation \eqref{hc1}(and \eqref{hc2}), which comprises only of  positive-definite terms and divergence terms\footnote{modulo a suitable time-coordinate gauge condition}, is especially useful to define and interpret the flux terms in various regions. 
\end{remark}
Firstly, let us go through the Brill's original argument as it shall be immediately extended to our context.
\begin{theorem}[Brill \cite{B59}] \label{Brill}
Suppose $(\bar{\Sigma}_{AF}, \bar{q}_{AF})$ is asymptotically flat, axially symmetric Riemannian 3-manifold that is (globally) homeomorphic to $\mathbb{R}^3$  and that it can be represented in $(\r, z, \phi)$ coordinates as follows
\begin{align}
\bar{q}_{AF} = e^{\sigma_{AF}} (e^{2q_{AF}} (d\r^2 + dz^2) + \r^2 d\phi^2)
\end{align}
such that it is asymptotically Schwarzschild: 
\begin{enumerate}
\item $e^{\sigma_{AF}/4} \to (1+ \frac{m_B}{2r})$ (asymptotically Schwarzschild)
\item $\ptl_\r (e^{\sigma_{AF}/4})=\ptl_\r q_{AF} =q_{AF} =0$ on the axis $\r=0$
\item  $q_{AF} = \mathcal{O} (r^{-2})$
\end{enumerate}
 as $r \to \infty$, for a constant $m_B$ and 
that $(\bar{\Sigma})_{AF}, \bar{q}_{AF})$ is time-symmetric, then 
\begin{align}
m_{B} \geq 0,
\end{align}
with the equality occurring only if $\bar{q}_{AF}$ is flat. 
\end{theorem}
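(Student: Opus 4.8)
The plan is to follow Brill's original strategy: convert the total mass $m_B$ into a bulk integral over the flat $\R^3$ underlying the Brill coordinates, and then read off positivity from the manifestly sign-definite structure of the scalar-curvature formula derived above. Since the slice is time-symmetric, $\bar K\equiv 0$, so the Hamiltonian constraint reduces (to $\bar R_{\bar q}=0$ in the vacuum case relevant to Brill, or to $\bar R_{\bar q}\ge 0$ more generally under the weak energy condition). I would use the explicit expression
\begin{align*}
\bar R_{\bar q} = -e^{-2q-\sigma}\Big( \halb\,|\grad_0 \sigma|^2 + 2\Delta_0\sigma + 2\Delta_0 q - \tfrac{2}{\r}\ptl_\r q \Big)
\end{align*}
established for the metric \eqref{Brillmetric}.

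First I would express the mass as a flux. From hypothesis (1), $e^{\sigma/4}\to 1+\tfrac{m_B}{2r}$, so $\sigma\sim\tfrac{2m_B}{r}$ as $r\to\infty$; applying the flat divergence theorem to the axisymmetric flat Laplacian $\Delta_0$ (whose flat volume element is $d^3x=\r\,d\r\,dz\,d\phi$) gives $\int_{\R^3}\Delta_0\sigma\,d^3x=\lim_{r\to\infty}\oint_{S_r}\ptl_r\sigma\,dA=-8\pi m_B$, so that $m_B=-\tfrac{1}{8\pi}\int_{\R^3}\Delta_0\sigma\,d^3x$. Solving the scalar-curvature formula for $2\Delta_0\sigma$ and integrating over $\R^3$ then yields
\begin{align*}
-16\pi m_B =& -\int_{\R^3} e^{2q+\sigma}\bar R_{\bar q}\, d^3x - \halb\int_{\R^3}|\grad_0\sigma|^2\, d^3x \\
& - 2\int_{\R^3}\Delta_0 q\, d^3x + \int_{\R^3}\tfrac{2}{\r}\ptl_\r q\, d^3x .
\end{align*}

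The key step is that the last two integrals vanish. The term $\int_{\R^3}\Delta_0 q\,d^3x$ is a pure flat divergence whose boundary flux at infinity decays because $q=\mathcal O(r^{-2})$ (hypothesis (3)), and which has no axis contribution since $q$ is smooth with $\ptl_\r q=0$ on $\Gamma$ (hypothesis (2)). The term $\int_{\R^3}\tfrac{2}{\r}\ptl_\r q\,d^3x=2\int\big(\ptl_\r q\big)\,d\r\,dz\,d\phi$ collapses, after integrating in $\r$, to the difference of boundary values $q(\infty)-q(0)=0$, again using $q=0$ on the axis and $q\to 0$ at infinity. With these gone, the right-hand side is a sum of two manifestly nonpositive terms (recall $\bar R_{\bar q}\ge 0$ and $e^{2q+\sigma}>0$), whence $m_B\ge 0$. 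For the rigidity statement, equality $m_B=0$ forces both terms to vanish: $|\grad_0\sigma|^2\equiv 0$, so $\sigma$ is constant and hence $\sigma\equiv 0$ by its decay, and $\bar R_{\bar q}\equiv 0$. Substituting $\sigma\equiv 0$ back into the formula reduces $\bar R_{\bar q}=0$ to $\ptl_\r^2 q+\ptl_z^2 q=0$, i.e.\ $q$ is harmonic in the flat $(\r,z)$ half-plane; extending across the axis by the regularity hypothesis and using $q\to 0$ at infinity, the maximum principle forces $q\equiv 0$, so $\bar q_{AF}$ is flat.

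The main obstacle I anticipate is not the positivity mechanism itself — which is immediate once the formula is in hand — but the careful treatment of the divergence terms: one must simultaneously control the behaviour on the axis $\Gamma$ (no spurious line contribution from the $\tfrac{1}{\r}\ptl_\r q$ and $\Delta_0 q$ terms) and the fall-off at infinity, and justify that every integrand is absolutely integrable so that the divergence theorem and the interchange of limits are legitimate. The asymptotic-Schwarzschild hypotheses (1)--(3) are precisely calibrated so that each boundary contribution either vanishes or reproduces exactly the coefficient $m_B$.
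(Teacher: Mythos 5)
Your proposal is correct and follows essentially the same route as the paper's proof: integrate the Brill-coordinate scalar-curvature (Hamiltonian constraint) identity over a large ball, identify $\lim_{r\to\infty}\int\Delta_0\sigma_{AF}$ with a flux proportional to $-m_B$ via hypothesis (1), discard the $\Delta_0 q$ and $\tfrac{1}{\r}\ptl_\r q$ terms using the axis regularity (2) and fall-off (3), and read off $m_B\geq 0$ from the resulting sum of nonnegative bulk terms. The only differences are that you spell out the rigidity step (with $\sigma\equiv 0$ the constraint makes $q$ harmonic in the $(\r,z)$ half-plane, killed by the boundary conditions and the maximum principle), which the paper merely asserts, and your normalization $\int_{\R^3}\Delta_0\sigma\,d^3x=-8\pi m_B$, giving $m_B=\frac{1}{32\pi}\int_{\R^3}\vert\grad_0\sigma\vert^2\,d^3x$, is the standard one, whereas the paper's stated flux $-8\pi^2 m_B$ appears to carry a spurious factor of $\pi$ that propagates into its displayed mass formulas without affecting the positivity conclusion.
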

\begin{proof}
Consider the Hamiltonian constraint equation for $(\bar{\Sigma}_{AF}, \bar{q}_{AF}, \bar{K}_{AF})$ : 
\begin{align}\label{hcaf}
-e^{-2q-\sigma} \left( \halb \vert \grad_0 \sigma_{AF} \vert^2 + 2 \Delta_0 \sigma_{AF} + 2 \Delta_0 q_{AF} -\frac{2}{\rho} \ptl_\rho q_{AF}   \right) + tr(\bar{K}_{AF})^2 - \Vert \bar{K}_{AF}\Vert_{\bar{q}_{AF}}^2 =\, 0
\end{align}
Now consider the integration of the equation \eqref{hcaf} with $tr(K)=0$, in a coordinate ball $B(r)$ of radius $r$, where $r$ is made arbitrarily large in a limiting sense 

\begin{align}\label{hc-int}
-\int _{\mathbb{B}(r)} (\Delta_0 \sigma_{AF} + \Delta_0 q_{AF} - \frac{1}{\rho} \ptl_\r q_{AF}) \r d\r dz d\phi = \halb \int e^{2q_{AF} + \sigma_{AF}} \left(  \Vert K_{AF} \Vert^2_{q_{AF}} \right) + \halb \vert \grad_0 \sigma_{AF} \vert^2 \r d\r dz d\phi.
\end{align}
Due to the asymptotic conditions on $q_{AF}$ the terms $\int \Delta_0 q$ and $\int \frac{1}{\rho} \ptl_\rho q$ drop out as $r \to \infty$ . Now consider the $\int \Delta_0 \sigma_{AF}$ term, integrating it over the interior $B(r)$ of a large sphere $\mathbb{S}^2 (r) \fdg= \{ \rho^2 + z^2 =r^2 \}$ 
\begin{align}
\lim_{r \to \infty} \int_{B(r)} \Delta_0 \sigma_{AF} = \lim_{r \to \infty} \int _{\mathbb{S}^2(r)} \hat{n} (\grad_0 \sigma_{AF})\, r^2 \sin \theta d\theta d\phi
\end{align}
where $\hat{n}$ is the outward normal $1-$form to $\mathbb{S}^2(r).$

Note that $\ptl_r e^{\sigma_{AF}/4}$ and $\frac{1}{4}\ptl_r \sigma_{AF}$ have the same leading order terms ($r^{-2}$)
\begin{align}
e^{\sigma_{AF}/4} =& 1+ \frac{m_B}{2r} + \mathcal{O} (r^{-2}) \\
\log{e^{\sigma_{AF}/4}} = \sigma/4 =& \log (1+ \frac{m_B}{2r} + \mathcal{O}(r^{-2})) \\
=& \frac{m_B}{2r} + \mathcal{O}(r^{-2})
\intertext{this implies}
\ptl_r \sigma_{AF} =& -\frac{2m_B}{r^2} + \mathcal{O}(r^{-3})
\end{align}
in the asymptotic region. Thus
\begin{align}
 \hat{n} (\grad_0 \sigma_{AF}) =& \ptl_r \sigma_{AF} \notag\\
 =& - \frac{2m_B}{r^2} + \mathcal{O}(r^{-3})
\end{align}

\noindent Therefore, 
\begin{align}
\lim_{r\to \infty}\int_{\mathbb{S}^2(r)} \hat{n} (\grad_0 \sigma_{AF}) r^2 \sin \theta d\theta d\phi =& - 8 \pi^2 m_B.
\end{align}
Now, plugging into the equation \eqref{hc-int}, we get the mass formula

\begin{align} \label{mass-weyl}
m_B =& \frac{1}{16\pi^2} \int_{\mathbb{R}^3} 
  e^{2q_{AF} + \sigma_{AF}}  \Vert K \Vert_{\bar{q}_{AF}}^2 + \halb \vert \grad_0 \sigma_{AF} \vert^2 \r d\r dz d\phi \\
  =& \frac{1}{8\pi} \int_{\mathbb{R}^3} 
   e^{2q_{AF} + \sigma_{AF}}  \Vert K_{AF} \Vert_{\bar{q}_{AF}}^2 + \halb \vert \grad_0 \sigma_{AF} \vert^2 \r d\r dz
\end{align}
If we impose the time-symmetry condition, the mass expression simplifies as
\begin{align}\label{Brillmassfinal}
m_B = \frac{1}{16\pi} \int_{\mathbb{R}^3} \vert \grad_0 \sigma_{AF} \vert^2 \r d\r dz \quad \quad \text{(Brill mass formula).}
\end{align}
The rigidity for the equality case $(m=0)$ was also proven using the explicit nature of the expression above. The convergence of the integral in \eqref{Brillmassfinal} follows from the regularity on this axis and the fall-off conditions. Furthermore, the argument allows a straightforward extension to the case with multiple ends. 
\end{proof}

Brill's original work was later extended by Gibbons-Holzegel \cite{GH06} (for 3+1 and 4+1 dimensions with apparent horizons and $A \neq 0$), Dain \cite{D09} (maximal case and $A \neq 0$, see also the nice extension to the linear stability problem for extremal Kerr \cite{DA_14}) and finally by Chru\'sciel \cite{C08}  for the general positively curved axially symmetric case, where the existence of the Brill coordinates for these metrics was also proved. 
 A formal verification of the correspondence of $m_B$ with $m_{ADM}$, also with weaker fall-off conditions for $q$, follows from Section 3 in \cite{C08}. 

We would like to remark that the ADM mass is captured by the inverse first order fall-off of the asymptotically  flat metric. Since the form of the Hamiltonian constraint has only undifferentiated metric modification from the vacuum case, for the purposes
of the current work we shall adopt the definition of Luo-Xie-Zhang\cite{LXZ10} to define notion of the asymptotically de Sitter metrics. 

\begin{definition}[Asymptotically de Sitter]
A smooth initial data hypersurface $(\bar{\Sigma}, \bar{q})$  that is (globally) homeomorphic to $\mathbb{R}^3$ is called $\Omega-$asymptotically de Sitter if
\begin{itemize}
\item   for a compact set $U$, $\exists$ a diffeomorphism $\bar{\Sigma}\setminus U \to \mathbb{R}^3 \setminus B_1(0)$ (`end' or asymptotic region)
\item $\exists$ a constant $\Omega >0$ such that 
\begin{align}
\bar{q} = \Omega^2 \bar{q}_{AF}, \quad \bar{K} = \Omega \bar{K}_{AF} 
\end{align}
\end{itemize}
in the asymptotic region, where $ \bar{q}_{AF}$ and $\bar{K}_{AF}$ are are such that $(\bar{\Sigma}_{AF}, \bar{q}_{AF}, \bar{K}_{AF})$ is a smooth asymptotically flat metric with convergent ADM mass. 

\end{definition}
Furthermore, in \cite{LXZ10} they prove a positive mass theorem for asymptotically de Sitter metrics based on the positivity of the ADM mass and the following definition of the mass: 

\begin{align}
m (\bar{q}, \bar{K}) \fdg = \Omega\, m_{ADM} (\bar{q}_{AF}, \bar{K}_{AF}).
\end{align}
for CMC slices that satisfy the condition $ \vert tr(K) \vert \leq \sqrt{3 \Lambda}.$ The existence of such slices is also discussed. 
In the following we shall apply this for the case of axial symmetry (in the next section we shall also deal with the  twist $\neq 0$ case). Let us define,
\begin{align}
m (\bar{q}, \bar{K}) \fdg = \Omega\, m_B(\bar{q}_{AF}, \bar{K}_{AF}).
\end{align}
where $m_{B} (\bar{q}_{AF}, \bar{K}_{AF})$ is the Brill mass of the axially symmetric $(\bar{\Sigma}_{AF}, \bar{q}_{AF}, \bar{K}_{AF})$.
\begin{theorem}\label{ADS-mass}
Suppose $(\bar{\Sigma}, \bar{q})$ is an axially symmetric, $\Omega-$asymptotically de Sitter CMC initial data set that is (globally) homeomorphic to $\mathbb{R}^3$ and  that $\vert tr(K) \vert = \sqrt{2 \Lambda}$, $(\bar{\Sigma}_{AF}, \bar{q}_{AF})$ is axially symmetric and asymptotically Schwarzschild in the sense of Theorem \ref{Brill}. Further suppose that $\bar{q}$ can be represented in $(\r, z, \phi)$ coordinates as 
\begin{align} \label{Brillmetric1}
\bar{q} = e^{\sigma} \left( e^{2q} (d\rho^2 + dz^2) + \rho^2 d\phi^2 \right)
\end{align}

(or equivalently  as \eqref{wmmetric}), then the mass $m (\bar{q}, \bar{K})$ converges and 

\begin{align}
m (\bar{q}, \bar{K}) > 0.
\end{align}
 
\end{theorem}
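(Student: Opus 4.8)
The plan is to reduce the claim to Brill's argument (Theorem \ref{Brill}) applied to the conformally associated asymptotically flat data, the decisive point being that the prescribed value $|tr(\bar{K})| = \sqrt{2\Lambda}$ forces an exact cancellation of the cosmological and mean‑curvature terms in the Hamiltonian constraint. Since by definition $m(\bar{q},\bar{K}) = \Omega\, m_B(\bar{q}_{AF},\bar{K}_{AF})$ with $\Omega > 0$, it suffices to prove $m_B(\bar{q}_{AF},\bar{K}_{AF}) > 0$. Because the rescaling $\bar{q} = \Omega^2\bar{q}_{AF}$, $\bar{K} = \Omega\bar{K}_{AF}$ is by a \emph{constant} $\Omega$, in the Brill representation \eqref{Brillmetric1} it amounts to $q = q_{AF}$ and $\sigma = \sigma_{AF} + 2\log\Omega$; in particular $\grad_0\sigma = \grad_0\sigma_{AF}$, so the asymptotically Schwarzschild fall‑off of $\sigma_{AF}$ and the axis regularity required in Theorem \ref{Brill} are inherited, and the boundary flux defining $m_B$ is unaffected by $\Omega$.

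The first computation I would carry out is the scalar curvature of the physical slice. With the timelike signature $\bar{g}(n,n) = -1$, the Hamiltonian constraint contributes $+(tr_{\bar{q}}\bar{K})^2$, $-\Vert\bar{K}\Vert_{\bar{q}}^2$ and a cosmological term $\mp 2\Lambda$; inserting the CMC hypothesis $(tr_{\bar{q}}\bar{K})^2 = 2\Lambda$ the cosmological contributions cancel identically and one is left with
\begin{align}
\bar{R}_{\bar{q}} = \Vert\bar{K}\Vert_{\bar{q}}^2 \geq 0.
\end{align}
This is exactly the nonnegativity that powers Brill's argument, now available for the expanding ($\Lambda>0$) slice without any maximality assumption. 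Rescaling by the constant $\Omega^2$ transfers the relation to the auxiliary data, $\bar{R}_{\bar{q}_{AF}} = \Vert\bar{K}_{AF}\Vert_{\bar{q}_{AF}}^2 \geq 0$, so $(\bar{\Sigma}_{AF},\bar{q}_{AF})$ is an axially symmetric, asymptotically Schwarzschild $3$‑metric of nonnegative scalar curvature.

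Next I would rerun the integration of Theorem \ref{Brill} on $(\bar{q}_{AF},\bar{K}_{AF})$: writing $\bar{R}_{\bar{q}_{AF}}$ in the Brill form \eqref{hc1}, multiplying by $-\tfrac12 e^{2q_{AF}+\sigma_{AF}}$ and integrating over $\mathbb{R}^3$ against $\rho\,d\rho\,dz\,d\phi$, the $\Delta_0 q_{AF}$ and $\tfrac1\rho\ptl_\rho q_{AF}$ contributions drop out by the axis conditions and the $q_{AF} = \mathcal{O}(r^{-2})$ decay, while $\int\Delta_0\sigma_{AF}$ reduces to the boundary flux $-8\pi^2 m_B$ precisely as in the proof of Theorem \ref{Brill}. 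This yields a mass formula of the type
\begin{align}
m_B = c\int_{\mathbb{R}^3}\left(\tfrac12\, e^{2q_{AF}+\sigma_{AF}}\Vert\bar{K}_{AF}\Vert_{\bar{q}_{AF}}^2 + \tfrac14|\grad_0\sigma_{AF}|^2\right)\rho\,d\rho\,dz, \qquad c>0,
\end{align}
extending \eqref{mass-weyl} to the present non‑time‑symmetric setting. The integrand is manifestly nonnegative, so $m_B \geq 0$, and strictness follows from the CMC value: since $|tr_{\bar{q}_{AF}}\bar{K}_{AF}| = \Omega\sqrt{2\Lambda} > 0$, the pointwise bound $\Vert\bar{K}_{AF}\Vert^2 \geq \tfrac13(tr_{\bar{q}_{AF}}\bar{K}_{AF})^2 = \tfrac23\Lambda\Omega^2 > 0$ makes the first term strictly positive. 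Equivalently one may invoke the rigidity of Theorem \ref{Brill}: $m_B = 0$ would force $\bar{q}_{AF}$ flat, whence $\Vert\bar{K}_{AF}\Vert^2 = \bar{R}_{\bar{q}_{AF}} = 0$, contradicting $tr\,\bar{K}_{AF}\neq 0$. Either route gives $m_B > 0$ and hence $m(\bar{q},\bar{K}) = \Omega\, m_B > 0$.

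I expect the genuine obstacle to be \emph{convergence}, i.e.\ justifying the boundary‑flux/bulk‑integral identity rigorously. Because $tr\,\bar{K}$ is a nonzero constant, the density $e^{2q+\sigma}\Vert\bar{K}\Vert^2$ does not decay on the physical slice — this is exactly the divergence of $E_{div}$ flagged earlier — so the nondecaying trace part of $\bar{K}$ threatens both the finiteness of $m_B$ and the integrability needed for the positive‑mass step. The purpose of the $\Omega$‑asymptotically de Sitter hypothesis is precisely to isolate the finite, gauge‑invariant content in the boundary mass $m_B$ of the rescaled asymptotically flat data, for which the asymptotically Schwarzschild fall‑off secures a finite flux. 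Making this passage to $(\bar{q}_{AF},\bar{K}_{AF})$ legitimate — splitting off the de Sitter background so that only the decaying trace‑free part of $\bar{K}_{AF}$ enters the bulk integral, and controlling the cross terms coupling it to the nondecaying trace — is the delicate part of the argument; once it is in place, the nonnegativity and the strict inequality follow as above.
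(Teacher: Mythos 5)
You take the same route as the paper: the paper's proof likewise writes the Hamiltonian constraint in Brill coordinates, integrates over large coordinate balls so that the $\Delta_0 q$ and $\tfrac{1}{\rho}\ptl_\rho q$ terms drop out, identifies $\lim_{r\to\infty}\int_{B(r)}\Delta_0\sigma = -8\pi^2 m_B(\bar{q}_{AF},\bar{K}_{AF})$ via $\bar{q}=\Omega^2\bar{q}_{AF}$, and uses the gauge value $|\mathrm{tr}(\bar{K})|=\sqrt{2\Lambda}$ together with $\bar{g}(n,n)=-1$ to cancel the cosmological term, arriving at $m(\bar{q},\bar{K}) = \tfrac{\Omega}{8\pi}\int \bigl(e^{2q+\sigma}\Vert\bar{K}\Vert^2_{\bar{q}} + \tfrac12|\grad_0\sigma|^2\bigr)\rho\,d\rho\,dz$. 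Your variant of running the integration on the rescaled data (using $q=q_{AF}$, $\sigma=\sigma_{AF}+2\log\Omega$) is equivalent because $\Omega$ is constant, and your explicit strictness bound $\Vert\bar{K}\Vert^2_{\bar{q}}\geq\tfrac13(\mathrm{tr}\,\bar{K})^2=\tfrac{2\Lambda}{3}$ is a more detailed version of the paper's bare assertion that the gauge condition yields $\bar{R}_{\bar{q}}>0$ and $m>0$.

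However, the convergence obstacle you flag in your last paragraph is not merely ``delicate'' — as stated, it is fatal, and your own estimate proves it: near infinity $e^{2q+\sigma}\Vert\bar{K}\Vert^2_{\bar{q}} \geq \tfrac{2\Lambda}{3}\,e^{2q+\sigma}\to\tfrac{2\Lambda}{3}\,\Omega^2>0$, so the bulk integral in the mass formula diverges, while the boundary flux it must equal is finite. Equivalently, the hypotheses are mutually inconsistent: $\bar{K}=\Omega\bar{K}_{AF}$ with $\bar{K}_{AF}=\mathcal{O}(r^{-2-\alpha})$ forces $\mathrm{tr}_{\bar{q}}\bar{K}\to 0$ at infinity, contradicting CMC with $|\mathrm{tr}(\bar{K})|=\sqrt{2\Lambda}$. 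The paper's proof contains exactly the same gap — it asserts convergence ``from the convergence of $m_B$'' without addressing the non-decaying trace part of $\bar{K}$, which is the very divergence mechanism behind $E_{div}$ that the paper itself flags. The natural repair is the gauge $|\mathrm{tr}(\bar{K})|=\sqrt{3\Lambda}$ (the de Sitter background value, matching the bound $|\mathrm{tr}(K)|\leq\sqrt{3\Lambda}$ of Luo--Xie--Zhang cited in the paper), for which the constraint leaves only the trace-free part, $\bar{R}_{\bar{q}}=\Vert\bar{K}^{TF}\Vert^2_{\bar{q}}$, restoring integrability when $\bar{K}^{TF}$ decays; but then the trace-based strictness you (and the paper) invoke disappears, and $m>0$ must instead come from rigidity ($m=0$ forcing $\sigma$ trivial and $\bar{K}^{TF}=0$). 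So: same approach as the paper, your convergence caveat is correct and in fact sharper than the paper's treatment, and neither your proposal nor the paper closes that step.
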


\begin{proof}
Recall the Hamiltonian constraint for the Brill form of the metric:
\begin{align}
-e^{-2q-\sigma} \left( \halb \vert \grad_0 \sigma \vert^2 + 2 \Delta_0 \sigma + 2 \Delta_0 q -\frac{2}{\rho} \ptl_\rho q   \right) + tr(\bar{K})^2 - \Vert \bar{K} \Vert_{\bar{q}}^2 - 2\Lambda \bar{g} (n, n) =&\, 0
\end{align}
Separating out the divergence terms and integrating this like before over a large coordinate ball $\mathbb{B}_r$ of radius $r$, we get
\begin{align}
&\int_{\mathbb{R}^3}(\Delta_0 \sigma + \Delta_0 q - \frac{1}{\r} \ptl_\r q) \r d\r dz \notag\\
&= \halb \int_{\mathbb{R}^3} \left(e^{2q+\sigma} (\Vert \bar{K}\Vert^2_{\bar{q}} - tr(\bar{K})^2 + 2 \Lambda \bar{g}(n,n)) + \halb \vert \nabla_0 \sigma \vert^2 \right) \r d\r dz
\end{align}
For the asymptotically de Sitter $(\bar{\Sigma}, \bar{q})$, like before the $\int \Delta_0 q$ and $\int \frac{1}{\r}\ptl_\r q $ terms drop out. Recall that $\bar{q} = \Omega^2 \bar{q}_{AF}$ in the asymptotic region, where $\bar{q}_{AF}$ is now asymptotically Schwarzschild. Therefore $ \lim_{r \to \infty} \int_{B(r)} \Delta_0 \sigma = -8\pi^2 m_{B} (\bar{q}_{AF}, \bar{K}_{AF})  $ 
\begin{align} \label{asymds-mass}
m(\bar{q}, \bar{K}) = \Omega\, m_{B} (\bar{q}_{AF}, \bar{K}_{AF}) =  \frac{\Omega}{8 \pi} \int_{\mathbb{R}^3} 
   e^{2q + \sigma}  ( \Vert K \Vert_{\bar{q}}^2) + \halb \vert \grad_0 u \vert^2 \r d\r dz 
\end{align}
The convergence of the integral in \eqref{asymds-mass} follows from the convergence of $m_B$ of $(\bar{\Sigma}_{AF}, \bar{K}_{AF})$ and the regularity conditions on the axis. Note that our time coordinate gauge condition $\vert tr(K) \vert = \sqrt{2\Lambda}$ results in $\bar{R}_{q} >0$ and $m >0.$ In addition, it also kills off the exponential term involving $\Lambda$ in the mass-energy expression. 

\end{proof}
\begin{remark}
By exploiting the `time-distance switch' one can also cover the expanding region with  `stationary' coordinates using a `time-translational' Killing vector. Thus, one can study variants of positive mass theorem using maximal hypersurfaces \'a la Brill. In this work however, we shall only be interested in cases where $n$ is manifestly timelike and $(\olin{\Sigma}, \bar{q})$ a Riemannian 3-manifold. 
\end{remark}
\section{The Case with Twist $ v \neq 0$}
We shall now perform the dimensional reduction of $(\bar{M}, \bar{g})$ satisfying \eqref{ee_pos_cosmo} with the ansatz \eqref{kk-ansatz}.
Recall the  the Einstein's equations for the $\Lambda > 0$ case:

\begin{align}\label{ee-cosmo}
\bar{R}_{\mu \nu} - \halb g_{\mu \nu} \bar{R} + \Lambda g_{\mu \nu} =0 \quad \text{for} \quad \Lambda >0 \quad \text{on} \quad (\bar{M}, \bar{g})
\end{align}
 its trace-reversed form
\begin{align}
R_{\mu \nu} = \Lambda g_{\mu \nu}  \quad \text{on} \quad (\bar{M}, \bar{g})
\intertext{and the Kaluza-Klein ansatz}
\bar{g} = g + e^{2u} \Phi^2,
\end{align}
where $g$ is the metric of $M = \bar{M} \setminus SO(2)$ and $g, u, A$ are independent of $\phi.$
Suppose we define the `curvature' $F$ of $A$ in the chosen coordinate frame as

\begin{align}
 F_{\mu \nu} \fdg = \grad_{\mu} A_\nu- \grad_{\nu} A_\mu,
 \end{align}
 Following \cite{YCB-VM96}, the Ricci curvature $(\bar{M}, \bar{g})$ can be projected onto the Ricci curvature of $(M, g)$ as

\begin{subequations} \label{Ricci-high-low}
\begin{align}
\Lambda \bar{g}_{\mu \nu}=\bar{R}_{\mu \nu} =& R_{\mu \nu} - \grad_\mu u \grad_\nu u - \grad_{\mu} \grad_ {\nu} u - \halb e^{2 u} F_{\mu \sigma} F^\sigma_\nu \label{Rmunu} \\
\Lambda \bar{g}_{\mu \phi}=\bar{R}_{\mu \phi} =& -\halb e^{-u} \grad_{\sigma} (e^{3 u} F^\sigma _\mu) \label{Rmu3} \\
\Lambda \bar{g}_{\phi \phi}=\bar{R}_{3 3} =& -e^{2u} (g^{\mu \nu} \grad_\mu \grad_\nu u + g^{\mu \nu} \grad_\mu u \grad_\nu u -\frac{1}{4} e^{2u} F_{\mu \nu} F^{\mu \nu} ) \label{R33}
\end{align}
\end{subequations}
for $\mu, \nu = 0, 1, 2.$
Fortuitously,  the equation \eqref{Rmu3} allows a twist potential analogous to the $\Lambda =0$ case due to $\bar{g}^\mu_\phi =0$ for $\mu = 0, 1, 2.$
Therefore, the equation \eqref{Rmu3} can be interpreted as the closure of a 1-form. Define

\[ G \fdg = e^{3 u} \, \leftexp{*}{F}, \]
then \eqref{Rmu3} implies
\[ d\, G =0. \]
This in turn implies there exists a twist potential $v$ such that
\[ G = d\,v. \]

The equation $d\, F =0$ for the twist potential $v$ can be transformed into a `wave maps' equation in the conformally transformed manifold $(\tilde{M}, \tilde{g}), \tilde{g} = e^{2\psi} g$ with $\psi = u$ in the interior of $M.$  
\begin{align} \label{twist_pot}
 \tild{\grad}_{\mu}(e^{-3u} \tild{g}^{\mu \nu} \ptl_\nu v) =0 
\end{align}
where $\tilde{\grad}$ is the (Levi-Civita) covariant derivative with respect to the metric $\tilde{g}.$
This equation constitutes one of the `shifted' wave map\footnote{this terminology is temporary, before we come up with a better wording} system. The other equation is given by 
\begin{align}
\tilde{\grad}^\mu \ptl_\mu u + \halb e^{-4u} \tilde{g}^{\mu \nu} \ptl_\mu v \ptl_\nu v + \Lambda e^{-2u}=& \,0.
\end{align}
Analogous to the Theorem \ref{reduction-notwist}, the quantity $\bar{R}_{\mu \nu} + e^{-2u}g_{\mu \nu} \bar{R}_{\phi \phi}$ gives the 2+1 Einstein's equations in a trace reversed form.  
Therefore, we have obtained the 2+1 shifted Einstein-wave map system
\begin{subequations}\label{shifted-ewm}
\begin{align}
E_{\mu \nu} =& \kappa T_{\mu \nu} \\
\tilde{\grad}^\mu \ptl_\mu u + \halb e^{-4u} \tilde{g}^{\mu \nu} \ptl_\mu v \ptl_\nu v + \Lambda e^{-2u}=& \,0 \\
\tild{\grad}_{\mu}(e^{-3u} \tild{g}^{\mu \nu} \ptl_\nu v) =&\, 0.
\end{align}
\end{subequations}
For simplicity in notation, we shall henceforth denote $\tilde{g}$ by $g$ itself. We would like to remark that a subclass of \eqref{shifted-ewm} with $\Lambda =0$ is the 2+1 Einstein-wave map system
\begin{align}
E_{\mu \nu} =& \kappa T_{\mu \nu} \\
\square_g U^i + \leftexp{(h)}{\Gamma}^i_{jk} g^{\a\b} \ptl_\a U^j \ptl_\b U^j =&0
\end{align}
for $\a, \b = 0, 1, 2$ and $i, j = 1, 2$, where
\[ U \fdg (M, g) \to (H^2, h)\]
is a wave map with the target the hyperbolic 2-plane $(\mathbb{H}^2, h).$
\subsection*{The Kerr-de Sitter Spacetime}
The Kerr-de Sitter family of solutions of \eqref{ee-cosmo} are solutions which represent massive rotating black holes. It has twist $v \neq 0.$ The metric is given as follows:
\begin{align}
\bar{g}_{KdS} =& -\frac{\Delta}{\Sigma} \left(\frac{dt - a\sin^2 \theta d\phi}{1+ \frac{\Lambda}{3}a^2} \right)^2 + \frac{\Sigma}{\Delta} dr^2 + \frac{\Sigma}{1+ \frac{\Lambda}{3} a^2 \cos^2 \theta} d\theta^2  \notag\\
&+ 
\frac{\sin^2 \theta (1+ \frac{\Lambda}{3} a^2 \cos^2\theta)}{\Sigma}  \left(\frac{adt- (r^2 +a^2) d\phi}{1+ \frac{\Lambda}{3} a^2} \right)^2
\end{align}
where 
\begin{subequations}
\begin{align}
\Delta =& r^2 -2mr + a^2 -\frac{\Lambda r^2}{3} (r^2 + a^2) \\
\Sigma=& r^2 + a^2 \cos^2 \theta,
\end{align}
\end{subequations}
$\theta \in [0, \phi],$ $\phi \in [0, 2\pi).$ The parameters $m$ and $a$ are the mass and the (specific) angular-momentum respectively.  For the part of the spacetime covered by this form of the metric, the Kerr-de Sitter family admits two Killing vectors
\[ \{ \ptl_t, \ptl_\phi. \} \]

The function $\Delta$ is a quartic polynomial with four roots. In order to avoid extremality of any kind, we shall restrict to the cases of Kerr-de Sitter spacetimes where $-\Delta$ admits precisely one negative root and three distinct non-negative roots - the black hole horizons $r_\pm$ and the cosmological horizon $r_c$. The upper non-negative root of $\Delta$ is the cosmological horizon $(r_c)$ and the lower two roots are the black hole horizons $(r_{\pm})$. Now let us represent the Kerr-de Sitter metric in the Kaluza-Klein or `Weyl-Papapetrou' form in the region $r \in (r_+, r_c)$, where $\Delta >0$. Firstly, we shall transform $(r, \theta)$ part of $\bar{g}_{KdS}$ in a conformally flat form:
introduce the auxillary radial coordinate $R(r)$ such that it is a regular solution of 
\[ \frac{dr}{dR} = \frac{(\Delta)^{1/2}}{R (1+ \frac{\Lambda}{3} a^2 \cos^2 \theta)^{1/2}}, \quad \text{for} \quad r_+ < r <r_c. \]
If we further introduce the coordinates 
\[ \r = R \sin \theta, \quad z = R \cos \theta, \quad \theta \in [0, \pi],\]
the $(r, \theta)$ 2-metric can be transformed into a conformally flat form as
\begin{align}
\frac{\Sigma}{(\r^2 + z^2)(1+ \frac{\Lambda}{3} a^2 \cos^2 \theta)} (d\r^2 + dz^2).
\end{align}
 Now consider the $(\bar{g}_{KdS})_{tt}, (\bar{g}_{KdS})_{t\phi},(\bar{g}_{KdS})_{\phi\phi} $ parts: we have 
 \begin{align}
 (\bar{g}_{KdS})_{tt} =& \frac{1}{\Sigma (1+ \frac{\Lambda}{3}a^2)^2}\left(-\Delta  + a^2\sin^2 \theta \left(1+ \frac{\Lambda}{3} a^2 \cos^2 \theta \right) \right) \notag\\
 (\bar{g}_{KdS})_{t\phi} =&   \frac{a\sin^2 \theta}{\Sigma (1+ \frac{\Lambda}{3}a^2)^2} \left( \Delta - (1+ \frac{\Lambda}{3} a^2 \cos^2 \theta)(r^2+a^2)\right) \notag \\
 (\bar{g}_{KdS})_{\phi\phi} = &  \frac{\sin^2 \theta}{\Sigma (1+ \frac{\Lambda}{3}a^2)^2} \left( - a^2 \sin^2 \theta \Delta +  (1+ \frac{\Lambda}{3} a^2 \cos^2 \theta)(r^2 + a^2)^2 \right)
 \end{align}
 Consider the quantities
 \begin{align}
 \frac{(\bar{g}_{KdS})_{\phi t}}{  (\bar{g}_{KdS})_{\phi \phi} } = \frac{a ( -2mr - \frac{\Lambda}{3}a^2 (r^2 + a^2)(1+ \cos^2 \theta))}{ - a^2 \sin^2 \theta \Delta +  (1+ \frac{\Lambda}{3} a^2 \cos^2 \theta)(r^2 + a^2)^2}
 \end{align}
 and
 
 \begin{align*}
&( \bar{g}_{KdS})_{t t} (\bar{g}_{KdS})_{\phi \phi} - (\bar{g}_{KdS})^2_{t\phi} \notag\\
 &= \frac{1}{\Sigma^2 (1+ \frac{\Lambda}{3} a^2 \cos^2 \theta)^2}  \left( \sin^2 \theta(-\Delta  + a^2\sin^2 \theta (1+ \frac{\Lambda}{3} a^2 \cos^2 \theta ) ) \cdot
 ( - a^2 \sin^2 \theta \Delta +  (1+ \frac{\Lambda}{3} a^2 \cos^2 \theta)(r^2 + a^2)^2 ) \right) \notag\\
 & \quad - \frac{a^2 \sin^4 \theta}{\Sigma^2 (1+ \frac{\Lambda}{3} a^2 \cos^2 \theta)^2} (  \Delta - (1+ \frac{\Lambda}{3} a^2 \cos^2 \theta)(r^2+a^2))^2.
 \end{align*}
Then the Kerr-de Sitter spacetime can be represented in the Kaluza-Klein or `Weyl-Papapetrou' form as 
follows
\begin{align*}
\bar{g}_{KdS}  =& \frac{e^{-2u}}{\Sigma^2 (1+ \frac{\Lambda}{3} a^2 \cos^2 \theta)^2}    \notag\\
& \cdot \left( \sin^2 \theta(-\Delta  + a^2\sin^2 \theta (1+ \frac{\Lambda}{3} a^2 \cos^2 \theta ) ) \cdot
 ( - a^2 \sin^2 \theta \Delta +  (1+ \frac{\Lambda}{3} a^2 \cos^2 \theta)(r^2 + a^2)^2 ) \right ) dt^2 \notag\\
  & - \frac{ e^{-2u}a^2 \sin^4 \theta}{\Sigma^2 (1+ \frac{\Lambda}{3} a^2 \cos^2 \theta)^2} (  \Delta - (1+ \frac{\Lambda}{3} a^2 \cos^2 \theta)(r^2+a^2))^2 dt^2 \notag\\
&+ e^{-2u}\, \frac{\sin^2 \theta \left( - a^2 \sin^2 \theta \Delta +  (1+ \frac{\Lambda}{3} a^2 \cos^2 \theta)(r^2 + a^2)^2 \right)}{(\r^2 +z^2)(1+ \frac{\Lambda}{3} a^2 \cos^2 \theta) (1+ \frac{\Lambda}{3}a^2)^2}  (d\r^2 + dz^2)\notag\\
& + e^{2u} (d\phi + \frac{a ( -2mr - \frac{\Lambda}{3}a^2 (r^2 + a^2)(1+ \cos^2 \theta))}{ - a^2 \sin^2 \theta \Delta +  (1+ \frac{\Lambda}{3} a^2 \cos^2 \theta)(r^2 + a^2)^2} dt )^2
\end{align*}

where the norm and the vector potential corresponding to the `shift' of the Killing vector $\ptl_\phi$ are given by 
\begin{subequations}
\begin{align}
e^{2u} =& \frac{\sin^2 \theta}{\Sigma (1+ \frac{\Lambda}{3}a^2)^2} \left( - a^2 \sin^2 \theta \Delta +  (1+ \frac{\Lambda}{3} a^2 \cos^2 \theta)(r^2 + a^2)^2 \right) \\
A_{t} =&   \frac{a ( -2mr - \frac{\Lambda}{3}a^2 (r^2 + a^2)(1+ \cos^2 \theta))}{ - a^2 \sin^2 \theta \Delta +  (1+ \frac{\Lambda}{3} a^2 \cos^2 \theta)(r^2 + a^2)^2} \\
\intertext{respectively. Likewise, the conformal factor of the 2-metric in $(\r, z)$ is given by}
e^{2q}\fdg =& \frac{\sin^2 \theta}{(\r^2 +z^2)(1+ \frac{\Lambda}{3} a^2 \cos^2 \theta) (1+ \frac{\Lambda}{3}a^2)^2} \left( - a^2 \sin^2 \theta \Delta +  (1+ \frac{\Lambda}{3} a^2 \cos^2 \theta)(r^2 + a^2)^2 \right)
\end{align}
\end{subequations}

Note that the norm of the Killing vector, the conformal factor and the lapse function of the $2+1$ metric $g$  vanish on the axis $\theta \in \{ 0, \pi \}$. 
For the special case of $a =0$ we recover the Schwarzschild-de Sitter metric in isotropic coordinates: 
\begin{subequations}
\begin{align}
e^{2u} =&\, r^2 \sin^2 \theta\\
A_{t} =&\,  0 \\
\intertext{the conformal factor of the 2-metric $(\r, z)$}
e^{2q}=& \frac{r^4\sin^2 \theta}{\r^2 + z^2}.
\end{align}
\end{subequations}

\noindent Let us now move on to the positive mass expressions for the $v \neq 0$ case.
\subsubsection*{Kaluza-Klein coordinates}
 Consider the Kaluza-Klein ansatz for the metric $\bar{q}$
\begin{align} \label{kk-twist}
\bar{q} =&e^{-2u} (e^{2q'} (d\r^2 + dz^2)) + e^{2u}  (d\phi + A_i dx^i)^2,
\quad i = 1, 2.
\end{align}
The scalar curvature $\bar{R}_{\bar{q}}$ in this parametrization of the metric $\bar{q}$ is
\begin{align}
\bar{R}_{\bar{q}} =& - \halb e^{-4q' + 6u} (\ptl_z A_\rho - \ptl_\rho A_z)^2 -2 e^{-2q' + 2u} ( \Delta_0 q' - \frac{1}{\rho}  q' + \vert \grad_0 u \vert ^2 )
\end{align}
Suppose the 2+1 decomposition of the metric $(M, g)$ can be represented as 
\begin{align}
g = - \tilde{N}^2 dt^2 + \gamma_{ab} (dx^a + \tilde{N}^a dt) (dx^b + \tilde{N}^b dt),
\end{align}
$a = 1, 2$ where $\tilde{N}$ and $\tilde{N}^a$ are the 2+1 lapse and shift respectively. Suppose $\eps^{ab}$ is the Levi-Civita antisymmetric symbol, then the quantity $\eps^{ab} (\ptl_a A_b - \ptl_b A_a)$ for $a, b =1, 2$ can be expressed in terms of the \emph{spacetime} twist 
potential $v$ as  
\begin{align}
\halb \eps^{ab} (\ptl_a A_b - \ptl_b A_a) = \tilde{N} e^{2q'-4u} (\ptl_t v - \tilde{N}^a \ptl_a v).
\end{align}
Therefore, the scalar curvature $\bar{R}_{\bar{q}}$ can be expressed in terms of $v$ as
\begin{align}
\bar{R}_{\bar{q}} =& - \halb \tilde{N}^2 e^{-2u} ( \ptl_t v - \tilde{N}^a \ptl_a v)^2 -2 e^{-2q' + 2u} ( \Delta_0 q' - \frac{1}{\rho}  q' + \vert \grad_0 u \vert ^2 )
\end{align}
Therefore, the Hamiltonian constraint is
\begin{align}
 - \halb \tilde{N}^2 e^{-2u} (\ptl_t v - \tilde{N}^a \ptl_a v)^2 -2 e^{-2q' + 2u} ( \Delta_0 q' - \frac{1}{\rho}  q' + \vert \grad_0 u \vert ^2 ) \notag\\ 
 + tr(K)^2  - \Vert K \Vert^2_{\bar{q}} + 2 \Lambda \bar{g}(n, n) =0.
\end{align}
\subsubsection*{Brill coordinates}
Likewise, if we consider a more general form of the original Brill metric:
\begin{align}\label{brill-twist}
q =& e^{\sigma} (e^{2q} (d\r^2 + dz^2) + \r^2 ( d\phi + A_a dx^a)^2), \quad a, b = 1, 2
\end{align}
where again the $1-$form $A$ is the vector potential corresponding to the `shift' of the Killing vector $\ptl_\phi$, the functions $\sigma, q, A$ are regular functions such that 
their derivatives with respect to $\r$ vanish on the axis $\Gamma$ and $ q=0$ on $\Gamma.$ 
Then the scalar curvature of $(\bar{\Sigma}, \bar{q})$ in these coordinates is given by 
\begin{align}
\bar{R}_{\bar{q}} =& \frac{1}{2} e^{-4q - \sigma} (-\r^2 (\ptl_z A_\r - \ptl_\r A_z)^2 ) -e^{-2q-\sigma} \left( \halb \vert \grad_0 \sigma \vert^2 + 2 \Delta_0 \sigma + 2 \Delta_0 q -\frac{2}{\rho} \ptl_\rho q   \right)  
\end{align}

Therefore the Hamiltonian constraint is
\begin{align}
 \frac{1}{2} e^{-4q + 2\sigma} (-\r^2 (\ptl_z A_\r - \ptl_\r A_z )^2 )  -e^{-2q-\sigma} \left( \halb \vert \grad_0 \sigma \vert^2 + 2 \Delta_0 \sigma + 2 \Delta_0 q -\frac{2}{\rho} \ptl_\rho q   \right)  \notag\\
  + tr(K)^2  - \Vert K \Vert^2_{\bar{q}} + 2 \Lambda \bar{g}(n, n) =0.
\end{align}

In the following we shall prove a positive mass thereom for the case $A\neq0$. As in Theorem \ref{ADS-mass}, the asymptotic behaviour of $(\bar{\Sigma}, \bar{q}, \bar{K})$ is governed by the asymptotics of $(\bar{\Sigma}_{AF}, \bar{q}_{AF}, \bar{K}_{AF})$ with convergent Brill mass. In particular, we shall assume that $\bar{q}_{AF}$ can be represented
in terms of Brill coordinates 
\begin{align}\label{brill-twist-af}
\bar{q}_{AF} =& e^{\sigma_{AF}} (e^{2q_{AF}} (d\r^2 + dz^2) + \r^2 ( d\phi + (A_{AF})_a dx^a)^2), \quad a, b = 1, 2
\end{align}
where $\sigma_{AF}, q_{AF}, A_{AF}$ are regular functions such that
\begin{align}\label{af-asym}
\sigma_{AF} = \mathcal{O}(r^{-1}), \ptl \sigma_{AF} = \mathcal{O}(r^{-2}), q_{AF} = \mathcal{O}(r^{-2}), \ptl q_{AF} = \mathcal{O}(r^{-3})
\intertext{and}
(A_{AF})_\r, (A_{AF})_z = \mathcal{O}(r^{-2}), \quad (\ptl A_{AF})_\r, (\ptl A_{AF})_z = \mathcal{O}(r^{-3})
\end{align}
in the asymptotic region of $(\bar{\Sigma}_{AF}, \bar{q}_{AF}, \bar{K}_{AF})$, $r = \sqrt{\r^2 + z^2}.$ Furthermore, they satisfy the regularity conditions on the
axis
\begin{align}\label{axis-af}
 \ptl_\r \sigma_{AF} = \ptl_\r q_{AF} = q_{AF} =0, \quad \text{on} \quad \r=0.
\end{align} 

In this setting, we shall define 
\begin{align}\label{ADS-def-twist}
 m (\bar{q}, \bar{K}) \fdg = \Omega \,m_B (\bar{q}_{AF}, \bar{K}_{AF}).
 \end{align}
\begin{theorem}
Suppose $(\bar{\Sigma}, \bar{q})$ is axially symmetric CMC initial data hypersurface with $\vert tr(K) \vert = \sqrt{2 \Lambda}$ that is (globally) homeomorphic to $\mathbb{R}^3$ and is $\Omega-$asymptotically de Sitter with $(\bar{\Sigma}_{AF}, \bar{q}_{AF}, \bar{K}_{AF})$ satisfying \eqref{brill-twist-af} $-$ \eqref{axis-af}. Further suppose that the metric
$\bar{q}$ can be represented in the form \eqref{brill-twist} (or equivalently \eqref{kk-twist} ) then the mass $m (\bar{q}, \bar{K})$  as defined in \eqref{ADS-def-twist} converges and 
\begin{align}
m (\bar{q}, \bar{K}) >0.
\end{align}

\end{theorem}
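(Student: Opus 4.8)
The plan is to follow the skeleton of the proof of Theorem~\ref{ADS-mass} essentially verbatim, the only genuinely new ingredient being the twist (magnetic) contribution to the scalar curvature. First I would start from the Hamiltonian constraint in the twisted Brill coordinates \eqref{brill-twist},
\begin{align*}
\halb e^{-4q+2\sigma}\big(-\r^2(\ptl_z A_\r - \ptl_\r A_z)^2\big) - e^{-2q-\sigma}\Big(\halb|\grad_0\sigma|^2 + 2\Delta_0\sigma + 2\Delta_0 q - \frac{2}{\r}\ptl_\r q\Big) + tr(K)^2 - \Vert K\Vert^2_{\bar q} + 2\Lambda\,\bar g(n,n) = 0,
\end{align*}
and separate the pure-divergence terms $\Delta_0\sigma$, $\Delta_0 q$, $\frac{1}{\r}\ptl_\r q$ from the manifestly non-negative data: the gradient density $\halb|\grad_0\sigma|^2$, the twist density $\halb e^{-2q+3\sigma}\r^2(\ptl_z A_\r-\ptl_\r A_z)^2$, which appears with the correct positive sign once the block is multiplied by $e^{2q+\sigma}$, and the extrinsic-curvature and cosmological combination $\Vert K\Vert^2_{\bar q}-tr(K)^2+2\Lambda\bar g(n,n)$.

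Next I would integrate over a large coordinate ball $\mathbb{B}_r$ against the flat measure $\r\,d\r\,dz\,d\phi$ and let $r\to\infty$. Exactly as in Theorem~\ref{ADS-mass}, the fall-off \eqref{af-asym} on $q_{AF}$ forces $\int\Delta_0 q$ and $\int\frac{1}{\r}\ptl_\r q$ to vanish in the limit, while the divergence theorem turns $\int_{\mathbb{B}_r}\Delta_0\sigma$ into a coordinate-sphere flux which, via the asymptotically-Schwarzschild expansion transported through $\bar q=\Omega^2\bar q_{AF}$, equals $-8\pi^2 m_B(\bar q_{AF},\bar K_{AF})$. The CMC gauge $|tr(K)|=\sqrt{2\Lambda}$ is then invoked precisely to reduce the combination $\Vert K\Vert^2_{\bar q}-tr(K)^2+2\Lambda\bar g(n,n)$ to the non-negative $\Vert K\Vert^2_{\bar q}$ by cancelling the cosmological contribution against $tr(K)^2$ (using $\bar g(n,n)=-1$), removing the otherwise sign-indefinite exponential $\Lambda$-term just as in the polarized case. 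This yields
\begin{align*}
m(\bar q,\bar K)=\Omega\, m_B(\bar q_{AF},\bar K_{AF})=\frac{\Omega}{8\pi}\int_{\mathbb{R}^3}\Big(e^{2q+\sigma}\Vert K\Vert^2_{\bar q}+\halb|\grad_0\sigma|^2+\halb e^{-2q+3\sigma}\r^2(\ptl_z A_\r-\ptl_\r A_z)^2\Big)\r\,d\r\,dz,
\end{align*}
every integrand being non-negative, so $m(\bar q,\bar K)\ge 0$; the rigidity case $m=0$ forces $\grad_0\sigma\equiv0$, $K\equiv0$ and $\ptl_z A_\r=\ptl_\r A_z$, i.e.\ flat, twist-free, time-symmetric (trivial) data, excluded by the hypotheses, whence $m>0$ for the configurations under consideration.

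Convergence of the integral is where the twist demands the new estimate: I would verify that the added density $\r^2(\ptl A)^2=\mathcal{O}(r^{-4})$, using $(A_{AF})_{\r,z}=\mathcal{O}(r^{-2})$ and $(\ptl A_{AF})_{\r,z}=\mathcal{O}(r^{-3})$ from \eqref{af-asym}, is integrable against $\r\,d\r\,dz$ at infinity, and that it is regular on the axis by the boundary conditions \eqref{axis-af}; the remaining terms converge exactly as for $m_B$. The main obstacle, as in the polarized argument, is the conformal-weight bookkeeping: one must check that the twist term enters the interior integral with a strictly positive sign and produces no residual boundary flux at spatial infinity, so that it contributes to positivity without altering the value of $m_B$. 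Rewriting the density via the spacetime twist-potential identity $\halb\eps^{ab}(\ptl_a A_b-\ptl_b A_a)=\tilde N e^{2q'-4u}(\ptl_t v-\tilde N^a\ptl_a v)$ reinterprets it in terms of $v$, after which positivity and convergence become transparent and the argument closes.
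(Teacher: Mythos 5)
Your proposal is correct and takes essentially the same approach as the paper's proof: you integrate the Hamiltonian constraint \eqref{ham-last} over large coordinate balls against $\r\,d\r\,dz$, discard the $\Delta_0 q$ and $\frac{1}{\r}\ptl_\r q$ boundary terms via the fall-off \eqref{af-asym}, identify $\lim_{r\to\infty}\int_{B(r)}\Delta_0\sigma = -8\pi^2 m_B(\bar{q}_{AF},\bar{K}_{AF})$, and invoke the gauge $\vert tr(K)\vert=\sqrt{2\Lambda}$ with $\bar{g}(n,n)=-1$ to cancel the cosmological term, arriving at the same manifestly non-negative mass formula \eqref{mass-final}. The only differences are cosmetic: your twist weight $e^{-2q+3\sigma}$ versus the paper's $e^{-2q+2\sigma}$ reflects an exponent bookkeeping ambiguity already present in the paper's own displays, and your rigidity-style exclusion of $m=0$ is equivalent to the paper's observation that the CMC gauge makes $\Vert K\Vert^2_{\bar{q}}$ strictly positive pointwise, since $K\equiv 0$ is incompatible with $\vert tr(K)\vert=\sqrt{2\Lambda}>0$.
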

\begin{proof}
The proof of positivity is directly based on the expression of the Hamiltonian constraint:
\begin{align}\label{ham-last}
 \frac{1}{2} e^{-4q + 2\sigma} (-\r^2 (\ptl_z A_\r - \ptl_\r A_z )^2 )  -e^{-2q-\sigma} \left( \halb \vert \grad_0 \sigma \vert^2 + 2 \Delta_0 \sigma + 2 \Delta_0 q -\frac{2}{\rho} \ptl_\rho q   \right)  \notag\\
  + tr(K)^2  - \Vert K \Vert^2_{\bar{q}} + 2 \Lambda \bar{g}(n, n) =0.
\end{align}
Integrating \eqref{ham-last} over a large coordinate ball $B(r)$ of radius $r$,

\begin{align*}
& \int_{B(r)} \Delta_0 \sigma + \Delta_0 q - \frac{1}{\r} \ptl_\r q \notag\\
&=\halb\int_{B(r)} \halb e^{-2q + 2\sigma} (\r^2 (\ptl_z A_\r - \ptl_\r A_z)^2) + e^{2q + \sigma} (-tr(K)^2 + \Vert \bar{K} \Vert^2_q - 2 \Lambda \bar{g}(n, n)) \notag\\
&\quad 
+ \halb  \int_{B(r)} \halb \vert \grad_0 \sigma \vert^2 
\end{align*}
The fall-off conditions for $q$ imply that the only boundary term
that remains is $ \lim_{r \to \infty} \int_{B(r)} \Delta_0 \sigma = \lim_{r \to \infty} \int_{\mathbb{S}^2(r)} \hat{n} (\grad_0 \sigma).$ Therefore, the limit $\lim_{r \to \infty} \int_{B(r)} \Delta_0 \sigma = -8 \pi^2 m_B (\bar{q}_{AF}, \bar{K}_{AF}).$ As a consequence, 
\begin{align}\label{mass-final}
m(\bar{q}, \bar{K}) = &\Omega \, m_B (\bar{q}_{AF}, \bar{K}_{AF})\notag\\
=& \frac{\Omega}{8 \pi}\int_{\mathbb{R}^3}  \left( \halb e^{-2q + 2\sigma} (\r^2 (\ptl_z A_\r - \ptl_\r A_z)^2) +  e^{2q + \sigma} ( \Vert \bar{K} \Vert^2_q) +  \halb \vert \grad_0 \sigma \vert^2 \right)\r d\r dz
\end{align}
with our time coordinate gauge condition $ \vert tr(K) \vert = \sqrt{2 \Lambda}$. It also follows from \eqref{mass-final} and the gauge condition   that $\bar{R}_{\bar{q}} >0$ and $m (\bar{q}, \bar{K}) >0.$ The convergence of the integral in \eqref{mass-final} follows from the regularity on the axis and fall-off conditions of the asymptotically flat $(\bar{\Sigma}_{AF}, \bar{q}_{AF}, \bar{K}_{AF})$ with convergent Brill mass. 
\end{proof}
We would like to point out some aspects of these formulas that could be useful to understand the open problems concerning the asymptotics, radiation and stability for Einstein's equations with $\Lambda >0$:

\begin{itemize}
\item The construction is robustly a large-data (strong field) construction as we make no smallness assumption of metric functions, angular-momentum or any other quantity.
\item An explicit formula for the geometric energy represented in terms of a dynamical metric quantities could be useful in studying the system from a PDE perspective.
\end{itemize}

However, a careful formulation of the Hamiltonian initial value problem is needed to study the evolution of such (mass-type) flux quantities at infinity (and at the horizons) for more general foliations and asymptotic conditions. A systematic study in this direction shall be postponed to a future work.
\subsection*{Acknowledgements}
The author gratefully acknowledges the encouraging interactions with Abhay Ashtekar in the early stages of this project. The author is also indebted to Vincent Moncrief for numerous encouraging discussions. A part of this work was done while the author was visiting the Institut des Hautes \'Etudes Scientifiques (IH\'ES) at Bures-sur-Yvette during the Fall 2016, where the author enjoyed the hospitality and conducive working conditions. 

 \bibliography{References_pmt.bib}
\bibliographystyle{amsplain}

\end{document}